\documentclass{tlp}

\usepackage[T1]{fontenc}
\usepackage{soul}
\usepackage{amsmath,amssymb}
\usepackage{amsfonts}
\usepackage{stmaryrd}
\usepackage{mathrsfs}
\usepackage{xspace}
\usepackage{tikz}
\usetikzlibrary{shapes,positioning,intersections}
\usepackage{orcidlink}

\usepackage{hyperref}
\hypersetup{colorlinks=true,allcolors=blue}

\makeatletter
\def\@journal{%
 \vbox to 5.6\p@{\noindent\parbox[t]{4.8in}{\normalfont\affilsize\rmfamily
   {\itshape Under consideration in
    Theory and Practice of Logic Programming (TPLP)\/}\\[2.5\p@]
    \ }%
 \vss}%
}
\let\@j@urnal\@journal
\makeatother

\newtheorem{theorem}{Theorem}[section]
\newtheorem{proposition}[theorem]{Proposition}
\newtheorem{lemma}[theorem]{Lemma}
\newtheorem{example}[theorem]{Example}
\newtheorem{definition}[theorem]{Definition}

\newcommand{\overbar}[1]{\mkern 1.5mu\overline{\mkern-1.5mu#1\mkern-1.5mu}\mkern 1.5mu}

\newcommand{\partof}{\mathscr{P}}
\newcommand{\fun}{\rightarrow}

\newcommand{\Nat}{\mathbb{N}}
\newcommand{\fwp}{\partof_f}
\newcommand{\supp}[1]{\llfloor #1 \rrfloor}
\newcommand{\mwp}{\partof_m}
\newcommand{\emptymulti}{\multil \multir}
\newcommand{\multil}{\{\!\!\{}
\newcommand{\multir}{\}\!\!\}}
\newcommand{\multisum}{\uplus}
\newcommand{\Multisum}{\biguplus}

\newcommand{\Var}{\mathcal{V}}
\newcommand{\Subst}{\mathit{Subst}}
\newcommand{\Isubst}{\mathit{ISubst}}
\newcommand{\Ren}{\mathit{Ren}}
\newcommand{\mgu}{\mathsf{mgu}}

\newcommand{\match}{\mathsf{match}}
\newcommand{\matchlp}{\mathsf{match}_\lp}
\newcommand{\matchan}{\mathsf{match}_\an}
\newcommand{\matchshl}{\mathsf{match}_\shl}
\newcommand{\matchaprime}{\mathsf{match}'_\an}
\newcommand{\occ}{\mathit{occ}}
\newcommand{\rng}{\mathrm{rng}}
\newcommand{\dom}{\mathrm{dom}}
\newcommand{\vars}{\mathrm{vars}}

\newcommand{\Sharing}{\mathtt{Sharing}}

\newcommand{\Linp}{{\mathtt{ShLin}^{\omega}}}
\newcommand{\lp}{\mathrm{\omega}}

\newcommand{\ShLinp}{{\mathtt{ShLin}^2}}
\newcommand{\Andysh}{{\mathit{Sg}^2}}
\newcommand{\an}{\mathrm{2}}
\newcommand{\andybin}{\uplus}
\newcommand{\Andybin}{\biguplus}
\newcommand{\downclo}{{\mathop{\downarrow}}}

\newcommand{\Lin}{\mathtt{Lin}}
\newcommand{\ShLin}{\mathtt{ShLin}}
\newcommand{\shl}{\mathit{sl}}

\newcommand{\wrt}{w.r.t.~}
\newcommand{\ie}{i.e., }

\newcommand{\ra}{\rightarrow}

\newcommand{\Ra}{\Rightarrow}


\newcommand{\member}{\mathit{member}}

\pagestyle{plain}

\begin{document}


\title{Optimal matching for sharing and linearity analysis}

\begin{authgrp}
  \author{\sn{Amato} \gn{Gianluca}
  \orcidlink{0000-0002-6214-5198}}
  \affiliation{University of Chieti--Pescara\\  	
  	\email{gianluca.amato@unich.it}}
  \author{\sn{Scozzari} \gn{Francesca}
  \orcidlink{0000-0002-2105-4855}}
  \affiliation{University of Chieti--Pescara\\	
  \email{francesca.scozzari@unich.it}}
\end{authgrp}


\maketitle

\begin{abstract}
  Static analysis of logic programs by abstract interpretation requires
  designing abstract operators which mimic the concrete ones, such as
  unification, renaming and projection. In the case of goal-driven
  analysis, where goal-dependent semantics are used, we also need a backward-unification
  operator, typically implemented through matching. In this paper we study
  the problem of deriving optimal abstract matching operators for
  sharing and linearity properties. We provide an optimal operator
  for matching in the domain $\Linp$, which can be easily instantiated to
  derive optimal operators for the domains $\ShLinp$ by Andy King and
  the reduced product $\Sharing \times \Lin$.
\end{abstract}

\begin{keywords}
  static analysis, sharing, linearity, matching.
\end{keywords}

\section{Introduction}

In the field of static analysis of logic programs, sharing information is one of the most interesting and widely used property.
The goal of sharing analysis is to detect sets of variables which share a common variable.  For instance, in the substitution $\{x/f(z,a),y/g(z) \}$ the
variables $x$ and $y$ share the common variable $z$. Sharing may also track and infer groundness in the same way as the \texttt{Def} domain \citep{delaBandaH93-ilps,ArmstrongMSS98-scp}.
 Typical applications of
sharing analysis are in the
fields of optimization of unification \citep{Sondergaard86-esop} and
parallelization of logic programs \citep{HermenegildoR95-jlp}.

It is widely recognized that the pioneering abstract domain $\Sharing$ \citep{Langen90-phd,JacobsL92-jlp} is not very precise, so
that it is often combined with other domains for tracking freeness, linearity, groundness or structural information (see \citeN{BagnaraZH05-tplp,CodishMBB+95-toplas} for comparative
evaluations).

Any domain for static analysis of logic programs must be equipped with four
standard operators: renaming, projection, union and unification. The theory of
abstract interpretation \citep{CousotC79-popl,CousotC92-jlp} ensures the
existence of the optimal (best correct) abstract operator for each concrete
operator. Nevertheless, while finding optimal operators for
renaming, projection and union is trivial most of the time, devising an optimal
abstract unification is much harder.


\cite{AmatoS10-tplp, AmatoS2014-tplp} have proposed a new (infinite) domain $\Linp$ which
precisely represents the interaction between  sharing and linearity properties,
while discharging other structural, irrelevant information. All the abstract
operators for $\Linp$ are shown to be optimal. From $\Linp$ the authors derive, for the first
time, optimal abstract unification for well-known domains combining sharing and
linearity, such as $\ShLinp$ \citep{King94-esop} and $\Sharing \times \Lin$
\citep{MuthukumarH92-jlp}.

In this paper we extend the $\Linp$ framework to the case of goal-dependent analysis. In this setting, the unification operator is used twice (see Figure~\ref{fig:unif}):
\begin{description}
	\item[forward unification:] performs parameter passing by unifying the goal and
	the \emph{call substitution} with the head of the chosen clause. The
	result is called \emph{entry substitution}.
	\item[backward unification:] propagates back to the goal the \emph{exit
		substitution} (that is, the result of the sub-computation), obtaining the
	\emph{answer substitution}\footnote{We follow
		\citeN{CortesiFW96-jlp} for the terminology of \emph{forward} and \emph{backward
			unification}. \citeN{Bruynooghe91-jlp} and \citeN{HansW92-tr} use \emph{procedure entry}
		and \emph{procedure exit}. \citeN{MuthukumarH91-iclp} use  \emph{call\_to\_entry}
		and \emph{exit\_to\_success}.}.
\end{description}

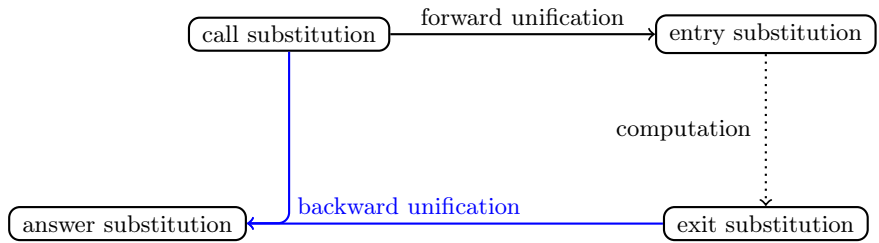
\begin{figure}
\begin{tikzpicture}[rounded corners,thick,rectangle,draw,->,inner xsep=5pt]
 \node[draw] (call) {call substitution};
  \node[draw, right=3.5cm of call] (entry) {entry substitution};
  \node[draw, below=2cm of entry] (exit) {exit substitution};
  \node[draw, left=5.5cm of exit] (answer) {answer substitution};
  \draw (call) edge node[above] {forward unification} (entry);
  \draw (entry) edge[dotted] node[left] {computation} (exit);
  \draw (exit) edge[color=blue] node[above,pos=0.61,color=blue] {backward unification} (answer);
  \node[inner sep=0pt] (inter) at (intersection of call--answer-|call and exit--answer) {};
  \node[inner sep=0pt,above=2pt of inter] (intera) {};
  \draw (call) -- ([yshift=0.2pt]inter.center) [color=blue]-> (answer);
\end{tikzpicture}
\caption{\label{fig:unif}The role of forward and backward unification in goal-dependent analysis.}
\end{figure}

Despite its name, backward unification may be implemented through \emph{matching}, exploiting the property that the  exit substitution is always more instantiated than the call substitution. Analyses with matching are strictly more precise than analyses which do not use matching (see \citeN{Bruynooghe91-jlp} and \citeN{AmatoS09-tplp} for a thorough discussion of the problem). This idea has been implemented in real abstract interpreters such as GAIA \citep{LeCharlierVH94-toplas} and PLAI \citep{MuthukumarH92-jlp}.




However, except for \cite{AmatoS09-tplp}, none of the papers which are based on matching \citep{Bruynooghe91-jlp,LeCharlierVH94-toplas,HansW92-tr,MuthukumarH92-jlp,King95-tr} has ever proved optimality of the proposed abstract operators. In particular, there is no known optimal matching operator for any domain combining sharing and linearity.

The lack of optimal operators brings two kinds of disadvantages: first, the
analysis obviously loses precision when using suboptimal abstract
operators; second, computing approximated abstract objects can lead to
a speed-down of the analysis. The latter is typical of sharing
analysis, where abstract domains are usually defined in such a way
that, the less information we have, the more abstract objects are
complex. This is not the case for other kind of analyses, such as
groundness analysis, where the complexity of abstract objects may grow
accordingly to the amount of groundness information they encode. Moreover,
knowing the optimal abstract operator, even if we do
not plan to implement it, is useful to understand the potentiality and
limits of the abstract domain in use, and to guide the search for a more precise (or more
efficient) domain.

For this reason, in this paper we define a matching operator for $\Linp$ and prove its optimality. Moreover, from this operator, we derive, for the first time, optimal matching operators for domains combining sharing and linearity information, such as $\ShLinp$ and $\Sharing \times \Lin$.

\section{Notations}
\label{sec:notation}


We fix a first order signature that includes a constant symbol and a function symbol of arity at least two, otherwise every term has at most one variable, and the structure of terms is trivial (we need this assumption in the proofs of optimality). The signature also includes a denumerable set of variables $\Var$. Given a term or other syntactic object $o$, we denote by $\vars(o)$ the set of variables occurring in $o$.
Given a set $A$, we denote by $\partof(A)$ the powerset of $A$ and by $\fwp(A)$ the set of finite subsets of $A$.


\subsection{Multisets}
\label{sec:multiset}

A \emph{multiset} is a set where repetitions are allowed. We denote by $\multil x_1, \ldots, x_m \multir$  a multiset, where $x_1, \ldots, x_m$ is a sequence with (possible) repetitions, and by $\emptymulti$ the empty multiset. We will often use the polynomial notation $v_1^{i_1} \ldots v_n^{i_n}$, where $v_1, \ldots, v_n$ is a sequence without repetitions, to denote a multiset $A$ whose element $v_j$ appears $i_j$ times. The set $\{v_j \mid i_j > 0\}$  is called the \emph{support} of $A$ and is denoted by $\supp{A}$. We also use the functional notation $A: \{v_1, \ldots, v_n\} \fun \Nat$, where $A(v_j)=i_j$.

In this paper, we only consider multisets whose support is \emph{finite}. We denote with $\mwp(X)$ the set of all the multisets whose support is \emph{any finite subset} of $X$. For example, $a^3c^5$ and $a^3b^2c^1$ are elements of $\mwp(\{a,b,c\})$.
%
%
The fundamental operation for multisets is the \emph{sum}, defined as
\[
  A \multisum B = \lambda v \in \supp{A} \cup \supp{B}. A(v)+B(v)
  \enspace .
\]
Note that we also use $\multisum$ to denote disjoint union for standard sets. The context will allow us to discern the proper meaning of $\uplus$. Given a multiset $A$ and $X \subseteq \supp{A}$, the \emph{restriction} of $A$ over $X$, denoted by $A_{|X}$, is the only multiset $B$ such that $\supp{B}=X$ and $B(v)=A(v)$ for each $v \in X$.



\subsection{The domain of existential substitutions}
\label{sec:existential}

We work in the framework of existential substitutions \citep{AmatoS09-tplp}, which allows us to simplify those semantic definitions which are heavily based on renaming  apart objects and to avoid variable clashes. In this framework all the  details  concerning renamings are moved to the inner level of the semantic  domain, where they are more easily manageable. We briefly recall the basic definitions of the domain.

The set of substitutions, idempotent substitutions, and renamings are denoted by $\Subst$, $\Isubst$, and $\Ren$ respectively. Given $\theta_1, \theta_2 \in \Subst$ and $U \in \fwp(\Var)$, the preorder $\preceq_U$ is defined as follows:
\[
  \theta_1 \preceq_U \theta_2 \iff \exists  \delta \in \Subst. \forall x \in U.\ \theta_1(x)=\delta(\theta_2(x)) \enspace.
\]

The notation $\theta_1 \preceq_U \theta_2$ states that $\theta_1$ is an instance of $\theta_2$ \wrt the variables in $U$. The equivalence relation induced by the preorder $\preceq_U$ is given by:
\[
  \theta_1 \sim_U \theta_2 \iff \exists \rho \in \Ren.
  \forall x \in U.\ \theta_1(x)=\rho(\theta_2(x)) \enspace .
\]


Let $\Isubst_{\sim_U}$ be the quotient set of $\Isubst$ \wrt $\sim_U$. The domain $\Isubst_\sim$ of \emph{existential substitutions} is defined as the disjoint union of all the $\Isubst_{\sim_U}$ for $U\in \fwp(\Var)$, namely:
\[
  \Isubst_\sim = \biguplus_{U\in \fwp(\Var)} \Isubst_{\sim_U} \enspace .
\]
In the following we write $[\theta]_{U}$ for the equivalence class of $\theta$ \wrt $\sim_U$.  The partial order $\preceq$ over $\Isubst_\sim$ is given by:
\[
  [\theta]_U \preceq [\theta']_V \iff U \supseteq V \wedge \theta \preceq_V \theta' \enspace .
\]
Intuitively, $[\theta]_U \preceq [\theta']_V$ means that $\theta$ is an instance of $\theta'$ \wrt the variables in $V$, provided that they are all variables of interest of $\theta$.

To ease notation, we often omit braces from the sets of variables of interest when they are given extensionally. So we write $[\theta]_{x,y}$ or  $[\theta]_{xy}$ instead of $[\theta]_{\{x,y\}}$ and $\sim_{x,y,z}$ instead of $\sim_{\{x,y,z\}}$.  When the set of variables of interest is clear from the context or when it is not relevant, it will be omitted. Finally, we omit the braces which enclose the bindings of a substitution when the latter occurs inside an equivalence class, \ie we write $[x/y]_U$ instead of $[\{x/y\}]_U$.

\subsubsection{Unification}
\label{sec:unification}

Given $U,V \in \fwp(\Var)$, $[\theta_1]_U, [\theta_2]_V \in \Isubst_{\sim}$, the most general unifier between these two classes is defined as the mgu of suitably chosen representatives, where variables not of interest are renamed apart. In formulas:
\begin{equation}
  \label{eq:mgu}
  \mgu([\theta_1]_U,[\theta_2]_V)= [\mgu(\theta'_1, \theta'_2)]_{U \cup V} \enspace,
\end{equation}
where $\theta_1 \sim_U \theta'_1 \in \Isubst$, $\theta_2 \sim_V \theta'_2 \in \Isubst$ and $(U \cup \vars(\theta'_1)) \cap (V \cup \vars(\theta'_2)) \subseteq U \cap V$. The last condition is needed to avoid variable clashes between the chosen representatives $\theta'_1$ and $\theta'_2$. It turns out that $\mgu$ is the greatest lower bound of $\Isubst_{\sim}$ ordered by $\preceq$.


A different version of unification is obtained when one of the two arguments is an existential substitution and the other one is a standard substitution. In this case, the latter argument may be viewed as an existential substitution where all the variables are of interest:
\begin{equation}
  \label{eq:mixmgu}
  \mgu([\theta]_U,\delta)= \mgu([\theta]_U,[\delta]_{\vars(\delta)}) \enspace .
\end{equation}
Note that deriving the general unification in \eqref{eq:mgu} from the special case in \eqref{eq:mixmgu} is not possible. This is because there are elements in $\Isubst_\sim$ which cannot be obtained as $[\delta]_{\vars(\delta)}$ for any $\delta \in \Isubst$.

This is the form of unification which is better suited for analysis of logic programs, where existential substitutions are the denotations of programs while standard substitutions are the result of unification between goals and heads of clauses. Devising optimal abstract operators for \eqref{eq:mixmgu} in different abstract domains is the topic of \cite{AmatoS10-tplp}.

\subsubsection{Matching}
\label{sec:matching}
Given $U_1,U_2 \in \fwp(\Var)$, $[\theta_1]_{U_1}  \in \Isubst_{\sim}$ and $[\theta_2]_{U_2}  \in \Isubst$, the matching of $[\theta_1]_{U_1}$ with $[\theta_2]_{U_2}$  \citep{AmatoS09-tplp} is defined in the same way as unification, as soon as none of the variables in $U_1$ get instantiated in the result. If this is not the case, the matching is undefined.

\begin{definition}[Matching]
  Given $[\theta_1]_{U_1}, [\theta_2]_{U_2} \in \Isubst_{\sim}$ we have that
  \begin{equation}
    \label{eq:matching}
    \match([\theta_1]_{U_1},[\theta_2]_{U_2})=\begin{cases}
      \mgu([\theta_1]_{U_1},[\theta_2]_{U_2}) & \text{if }  \theta_1 \preceq_{U_1
      \cap U_2} \theta_2, \\
      \text{undefined} & \text{otherwise.}
    \end{cases}
  \end{equation}
\end{definition}
Note that the condition $\theta_1 \preceq_{U_1 \cap U_2} \theta_2$ is equivalent to  $[\theta_1]_{U_1}= \mgu([\theta_1]_{U_1},[\theta_2]_{U_2})_{|U_1}$ \citep{AmatoS09-tplp}.

\begin{example}
  \label{ex:1}
  If we unify $[\theta_1]_{x,y}=[x/a, y/b]_{x,y}$ with $[\theta_2]_{y,z}=[z/r(y)]_{y,z}$, we obtain $[\theta_3]_{x,y,z}=[x/a, y/b, z/r(b)]_{x,y,z}$. Note that the variables $y$ and $z$ in $\theta_3$ are instantiated w.r.t. $\theta_2$, therefore $\match([\theta_2]_{y,z}, [\theta_1]_{x,y})$ is undefined. However, $x$ and $y$ in $\theta_3$ are not instantiated w.r.t. $\theta_1$, therefore $\match([\theta_1]_{x,y},[\theta_2]_{y,z})=[\theta_3]_{x,y,z}$.
\end{example}

Most of the time, when matching is applied in goal-dependent analysis of logic programs, we have that $U_1 \subseteq U_2$. This is because $U_1$ is the set of variables in a clause, while $U_2$ contains both the variables in the clause and in the call substitution. Nonetheless, we study here the general case, so that it can be applied in any framework.

\subsubsection{Other operations}
\label{sec:otherops}

Given $V \in \fwp(\Var)$ and $[\theta]_{U} \in \Isubst_\sim$, we denote by  $([\theta]_U)_{|V}$ the \emph{projection} of $[\theta]_U$ on the set of variables $V$, defined as:
\begin{equation}
  ([\theta]_{U})_{|V} = [\theta]_{U \cap V} \enspace .
\end{equation}

\subsection{Abstract interpretation}

Given two sets $C$ and $A$ of concrete and abstract objects
respectively, an \emph{abstract interpretation} \citep{CousotC92-jlc} is
given by an approximation relation $\rightslice \subseteq A \times C$.
When $a \rightslice c$ holds, this means that $a$ is a correct
abstraction of $c$. We are interested in the case when
$(A,\leq_A)$ is a poset and $a \leq_A a'$ means that $a$ is more
precise than $a'$.  In this case we require that, if $a \rightslice c$
and $a \leq_A a'$, then $a' \rightslice c$, too. In more detail, we
require what \citeN{CousotC92-jlc} call the \emph{existence of the best
  abstract approximation assumption}, \ie the existence of a map
$\alpha: C \ra A$ such that for all $a \in A, c \in C$, it holds that
$a \rightslice c \iff \alpha(c) \leq_A a$. The map $\alpha$ is called
the \emph{abstraction function} and maps each $c$ to its best
approximation in $A$.

Given a (possibly partial) function $f: C \ra C$, we say that
$\tilde{f}: A \ra A$ is a correct abstraction of $f$, and write
$\tilde{f} \rightslice f$, whenever
\[
a \rightslice c  \Rightarrow \tilde{f}(a)
  \rightslice f(c) \enspace ,
\]
assuming that $\tilde{f}(a) \rightslice f(c)$ is true whenever $f(c)$
is not defined.  We say that $\tilde{f}: A \ra A$ is the
\emph{optimal} abstraction of $f$ when it is the best correct
approximation of $f$, \ie when $\tilde{f} \rightslice f$ and
\[
\forall f':A \ra A.\  f' \rightslice f
  \Rightarrow \tilde{f} \leq_{A \ra A} f' \enspace .
\]
In some cases, we prefer to deal with a stronger framework, in which
the domain $C$ is also endowed with a partial order $\leq_C$ and
$\alpha: C \ra A$ is a left adjoint to $\gamma: A \ra C$, \ie
\[
  \forall c \in C. \forall a \in A. \alpha(c) \leq_A a \iff c \leq_C
  \gamma(a) \enspace .
\]
The pair $\langle \alpha, \gamma \rangle$ is called a \emph{Galois
  connection}.  In particular, we will only consider the case of
\emph{Galois insertions}, which are Galois connections such that
$\alpha \circ \gamma$ is the identity map. If $\langle \alpha, \gamma
\rangle$ is a Galois insertion and $f: C \ra C$ is a monotone map, the
optimal abstraction $\tilde{f}$ always exists, and it is definable as
$\tilde{f}= \alpha \circ f \circ \gamma$.

\section[Abstract matching over ShLinw]{Abstract matching over $\Linp$}
\label{sec:shlinomega}

The domain $\Linp$ \citep{AmatoS10-tplp} generalizes $\Sharing$ by
recording  multiplicity of variables in sharing groups.
For example, the
substitution $\theta=\{x/s(u,v), y/g(u,u,u), z/v\}$ is abstracted on
$\Sharing$ into $\{uxy,vxz\}$, where the sharing group $uxy$  means that
$\theta(u)$, $\theta(x)$, and $\theta(y)$ share a common variable, namely $u$.
In $\Linp$ the same substitution would be abstracted as $\{uxy^3,vxz\}$,
with the additional information that the variable $u$ occurs three times
in $\theta(y)$.
For the sake of completeness, in the following section we recall the basic definitions.

\subsection[The domain ShLin\textasciicircum w]{The domain $\Linp$}

We call \emph{$\omega$-sharing group} a
multiset of variables, \ie an element of $\mwp(\Var)$.
Given a substitution $\theta$ and a variable $v \in \Var$, we denote
by $\theta^{-1}(v)$ the $\omega$-sharing group $\lambda w \in \Var.
  \occ(v,\theta(w))$, which maps each variable $w$ to the number of
occurrences of $v$ in $\theta(w)$.

Given a set of variables $U$ and a set of $\omega$-sharing groups $S
  \subseteq \mwp(U)$, we say that $[S]_U$ \emph{correctly
  approximates} a substitution $[\theta]_W$ if $U=W$ and, for each $v
  \in \Var$, $\theta^{-1}(v)_{|U} \in S$. We write $[S]_U \rightslice
  [\theta]_W$ to mean that $[S]_U$ correctly approximates $[\theta]_W$.
Therefore, $[S]_U \rightslice [\theta]_U$ when $S$ contains all the
$\omega$-sharing groups in $\theta$, restricted to the
variables in $U$.

\begin{definition}[$\Linp$]
  The domain $\Linp$ is defined as
  \begin{equation}
    \Linp =\{ [S]_U \mid U \in \fwp(\Var), S \subseteq \mwp(U),
    S \neq \emptyset \Rightarrow \emptymulti \in S\} \enspace ,
  \end{equation}
  and ordered by $[S_1]_{U_1} \leq_\omega [S_2]_{U_2}$ iff $U_1=U_2$ and
  $S_1 \subseteq S_2$.
\end{definition}
The existence of the empty multiset, when $S$ is not empty, is required in
order to have a surjective abstraction function.

In order to ease the notation, we write $[\{\emptymulti, B_1,\ldots, B_n\}]_U$
as  $[B_1,\ldots, B_n]_U$ by omitting the braces and the empty
multiset, and variables in each $\omega$-sharing group are sorted lexicographically. Moreover, if $X \in \Linp$, we write
$B \in X$ in place of $X=[S]_U \wedge B \in S$. Analogously, if $S' \in
  \Linp$, we write $S' \subseteq X$ in place of $X=[S]_U \wedge S' \subseteq S$.
The best correct abstraction
of a substitution $[\theta]_U$ is
\begin{equation}
  \alpha_\omega([\theta]_U)=[\{ \theta^{-1}(v)_{|U} \mid v \in \Var \}]_U
  \enspace .
\end{equation}

\begin{example}
  \label{ex:2}
  Given $\theta=\{x/s(y,u,y), z/s(u,u), v/u\}$ and $U=\{w,x,y,z\}$, we have
  $\theta^{-1}(u)=uvxz^2$, $\theta^{-1}(y)= x^2y$,
  $\theta^{-1}(z)=\theta^{-1}(v)=\theta^{-1}(x)=\emptymulti$ and
  $\theta^{-1}(o)=o$ for all the other variables ($w$ included).
  Projecting over $U$ we obtain
  $\alpha_\omega([\theta]_U)=[x^2y,xz^2,w]_U$.
\end{example}

\subsection{Matching operator}
In the matching operation, when $[\theta_1]_{U_1}$ is matched with $[\theta_2]_{U_2}$, sharing groups in
$\theta_1$ and $\theta_2$ are joined together in the resulting  substitution.
However, not all combinations are allowed.
Assume $\alpha_\lp([\theta_i]_{U_i})=[S_i]_{U_i}$ for $i \in
  \{1,2\}$. If $\match([\theta_1]_{U_1},[\theta_2]_{U_2})$ is defined,
$\theta_1$ will not be further instantiated and thus
$\alpha_{\lp}(\match([\theta_1]_{U_1}, [\theta_2]_{U_2})_{|U_1} ) \subseteq
  S_1$.
Moreover, the sharing groups in $S_2$ which do not contain any variable in
$U_1$ are not affected by
the unification, since the corresponding existential variable does not
appear in $\theta_2(v)$ for any $v \in U_1$.

\begin{example}
  \label{ex:3}
  Let $\theta_1 = \{ x/r(w_1,w_2,w_2,w_3,w_3), y/a, z/r(w_1) \}$ with
  $U_1=\{x,y,z\}$ and  $\theta_2 = \{ x/r(w_4, w_5, w_6,w_8,w_8), u/r(w_4,w_7),
    v/r(w_7,w_8)\}$  with  $U_2=\{u,v,x\}$. We have that
  \begin{multline*}
    [\theta]_U= \match([\theta_1]_{U_1}, [\theta_2]_{U_2}) = [u/r(w_1,w_7),
    v/r(w_7,w_3), \\
    x/r(w_1,w_2,w_2,w_3,w_3), y/a, z/r(w_1)]_U \enspace ,
  \end{multline*}
  with $U=\{u,v,x,y,z\}$. At the abstract level, we have $[S_1]_{U_1}=
    \alpha_\lp([\theta_1]_{U_1})= [x^2, xz]_{U_1}$, $[S_2]_{U_2}=
    \alpha_\lp([\theta_2]_{U_2})= [uv,ux,vx^2,x]_{U_2}$ and $[S]_U =
    \alpha_\lp([\theta]_U)= [uv, uxz, vx^2, x^2]_U$.

  Note that the new sharing group $uxz$ has the property that its restriction
  to $U_1$ is in $S_1$. More generally, if we abstract $\theta_1$ \wrt the variables in $U_1$ we get $\alpha_\lp([\theta]_{U_1}) = [x^2, xz]_{U_1}$. This is equal to $[S_1]_{U_1}$, showing that no new sharing group has been introduced by the matching operation relatively to the variables in $U_1$. Moreover, the sharing group $uv$, which does not contain variables in $U_1$, is brought unchanged from $S_2$ to $S$.
\end{example}

Following the idea presented above, we may design an abstract matching
operation for the domain $\Linp$.

\begin{definition}[Matching over $\Linp$]
  \label{def:matchlp}
  Given $[S_1]_{U_1}, [S_2]_{U_2} \in \Linp$ we define
  \[
    \matchlp([S_1]_{U_1},[S_2]_{U_2})=\big[S'_2 \cup
      \big\{ X \in \mwp(U_1 \cup U_2) \mid \\
      X_{|U_1} \in S_1 \wedge X_{|U_2} \in (S''_2)^* \big\} \big]_{U_1
      \cup U_2}
  \]
  where
  \[
    S'_2=\{ B \in S_2 \mid B_{|U_1} = \emptyset \} \qquad S''_2=S_2
    \setminus S'_2 \enspace \qquad
    S^*=\left\{ \multisum \mathcal S \mid
    \mathcal S \in \mwp(S) \right\} \enspace .
  \]
\end{definition}
%
%
We now show an example of computing the matching over $\Linp$.

\begin{example}
  \label{ex:4}
	Consider $[S_1]_{U_1}=[x^2, xz]_{xyz}$ and $[S_2]_{U_2}=[uv,ux,vx^2,x]_{uvx}$ as in Example~\ref{ex:3}.
	We show that, from the definition of $\matchlp$, it holds:
	\[\matchlp([S_1]_{U_1},[S_2]_{U_2}) = [uv, uxz, xz, u^2x^2, ux^2, vx^2, x^2]_{uvxyz} \enspace .\]

	First, we have that $S'_2 = \{uv\}$ and $S''_2 = \{ux, vx^2, x\}$.
	Apart from $uv$, which directly comes from $S'_2$, all the other $\omega$-sharing groups in the result may be obtained by choosing a multiset $\mathcal M$ of sharing groups in $S''_2$ and summing them together, obtaining $\uplus \mathcal M$. Then, we consider if it is possible to add to $\uplus \mathcal M$ some occurrences of the variables $y$ and $z$, for instance $n$ occurrences of $y$ and $m$ of $z$, in such a way that $(\uplus \mathcal M) \uplus \multil y^{n} z^{m} \multir$ restricted to $U_1$ is a sharing group in $S_1$.

	We start by considering $\mathcal M $ with a single $\omega$-sharing group.
	\begin{itemize}
	\item If $\mathcal M = \multil ux \multir$, then $\uplus \mathcal  M = ux$ but $(\uplus \mathcal X)|_{U_1} = x \notin S_1$. However, we can add the variable $z$ to get $uxz \in S_1$, hence $uxz$ is an $\omega$-sharing group in the result of $\match_\lp$.

	\item If $\mathcal M = \multil vx^2 \multir$, then $\uplus \mathcal M = vx^2$ and $(\uplus \mathcal  M)|_{U_1} = x^2$ which is already an element of $S_1$. Therefore, $vx^2$ is in the result of $\match_\lp$.

	\item If $\mathcal M = \multil x \multir$, then $\uplus \mathcal M = x$, but $(\uplus \mathcal M)|_{U_1} = x \notin S_1$. However, we can add the variable $z$ to get $xz \in S_1$, hence $xz$ is in the result of $\match_\lp$.

	\end{itemize}

	We now consider the cases when $\mathcal M$ has two (possibly equal) elements. Note that if $\mathcal M$ contains $vx^2$, it cannot contain anything else, otherwise $\uplus \mathcal M$ would contain at least three occurrences of $x$, and no sharing group in $S_1$ could be matched. Therefore, the only choices are:
	\begin{itemize}
		\item if $\mathcal M = \multil ux, ux \multir$, then $\uplus \mathcal M = u^2x^2$ and $(\uplus \mathcal  M)|_{U_1} = x^2$
		which is already in $S_1$, hence $u^2x^2$ is in the result of $\match_\lp$;
		\item if $\mathcal M = \multil x ,x \multir$, then $\uplus \mathcal M = x^2$ and $(\uplus \mathcal  M)|_{U_1} = x^2$
		which is already in $S_1$, hence $x^2$ is in the result of $\match_\lp$;
		\item if $\mathcal M = \multil ux ,x \multir$, then $\uplus \mathcal M = ux^2$ and $(\uplus \mathcal  M)|_{U_1} = x^2$
		which is already in $S_1$, hence $ux^2$ is in the result of $\match_\lp$.
	\end{itemize}
	In theory, we should also consider the case when $\mathcal M$ has more than two elements, but in the example this would not lead to new results, for the same reason why $vx^2$ may only be used alone.


  We will prove that $\matchlp$ is correct, therefore comes at no surprise
  that
  \[
    \alpha_\lp(\match([\theta_1]_{U_1},[\theta_2]_{U_2}])))= [uv, uxz, vx^2,
    x^2]_U
    \leq [uv, uxz, xz, u^2x^2, ux^2, vx^2, x^2]_{U} \enspace .
  \]
  The lack of equality means that $\matchlp$ is not ($\alpha$-)complete \citep{GiacobazziRS00-jacm,AmatoS11-fi}. We will show
  later that it is optimal.
\end{example}

We try to give the intuition behind the definition of $\match$, especially in the context of the backward unification. First note
that the operator is additive on the first argument, namely:

\[\matchlp([S_1]_{U_1},[S_2]_{U_2}) = \left[\bigcup_{B\in S_1}
    \matchlp([\{B\}]_{U_1},[S_2]_{U_2}) \right]_{U_1\cup U_2}\]

This immediately implies that we can reason on matching considering one
sharing group at a time. Given a sharing group $B\in S_1$, which represents
the exit substitution from a sub-computation, we try to guess which
of the sharing groups in $S_2$ are part of the entry substitution
which has generated the sub-computation ended with $B$. In the simple case
when $U_1 \supseteq U_2$, we simply check that $B$
can be generated by the sharing groups in $S_2$, that is, there exists
$\mathcal S \in \mwp(S_2)$ such that $B_{|U_2} = \multisum \mathcal S$.


The difficult case is when $U_2$ contains some variables which are not in $U_1$. These
variables come from the call substitution, they are removed from the abstraction
before entering the sub-computation, and now should be re-introduced as precisely
as possible. In this case, we build a new sharing group $X$ such that
$X_{|U_1}$ coincides with $B$, and $X_{|U_2}$ is generated by $S_2$, namely
there exists $\mathcal S \in \mwp(S_2)$ such that $X_{|U_2} = \multisum
  \mathcal S$.
This condition ensures that we pair each exit substitution $\theta_1$ (in
the concretization of $B$) with some entry substitution $\theta_2$ (in the
concretization of $\mathcal S$) which is less instantiated than $\theta_1$.

Note that, although the abstract unification operator $\mgu_\lp$ defined in
\citeN{AmatoS10-tplp} takes an abstract object and a substitution as inputs,
the operator $\matchlp$  is designed in such a way that both  the
arguments are abstract objects. The reason for this choice is that
these are the variants needed for static analysis of
logic programs.
However, it would be possible to devise a variant of
$\mgu_\lp$ with two abstract arguments and variants of $\matchlp$ with
one abstract argument and a concrete one.

In order to prove correctness of $\matchlp$, we first extend the definition
of $\theta^{-1}$ to the case when it is applied to a sharing group $B$,
elementwise as
\begin{equation}
  \theta^{-1}(v_1^{i_1} \cdots v_n^{i_n} ) = \biguplus \multil
  (\theta^{-1}(v_1))^{i_1}, \dots, (\theta^{-1}(v_n))^{i_n} \multir \enspace .
\end{equation}
It turns out that
\begin{equation}
  (\eta \circ \theta)^{-1}(B)=\theta^{-1}(\eta^{-1}(B))
  \enspace ,
\end{equation}
a result which has been proved in \citeN{AmatoS10-tplp}.

\begin{example}
    Given $\eta=\{x/s(y,u,y), z/s(u,u), v/u\}$ and $\theta = \{ v/a, w/s(x, x) \}$, we have $\eta \circ \theta = \{ v/a, w/s(s(y,u,y), s(y,u,y)), x/s(y,u,y), z/s(u,u))\}$ and  $(\eta \circ \theta)^{-1}(u) = uw^2xz^2$. The same result may be obtained as $\theta^{-1}(\eta^{-1}(u))$, since $\eta^{-1}(u) = uvxz^2$ and
    \begin{multline*}
        \theta^{-1}(uvxz^2) = \multisum \multil \theta^{-1}(u), \theta^{-1}(v),
        \theta^{-1}(x), (\theta^{-1}(z))^2 \multir = \\
        = \multisum  \multil u, \multil \multir, xw^2, (z)^2 \multir = uw^2xz^2 \enspace .
    \end{multline*}
\end{example}

\begin{theorem}[Correctness of $\matchlp$]
  \label{th:matchcorrect}
  $\matchlp$ is correct w.r.t. $\match$.
\end{theorem}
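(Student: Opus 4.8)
The plan is to unfold the definition of correct abstraction for the binary operator: fixing $[S_1]_{U_1}, [S_2]_{U_2} \in \Linp$ and existential substitutions with $[S_1]_{U_1} \rightslice [\theta_1]_{U_1}$ and $[S_2]_{U_2} \rightslice [\theta_2]_{U_2}$, I must prove $\matchlp([S_1]_{U_1},[S_2]_{U_2}) \rightslice \match([\theta_1]_{U_1},[\theta_2]_{U_2})$. If $\match$ is undefined there is nothing to show, so I assume it is defined and write $[\theta]_U = \match([\theta_1]_{U_1},[\theta_2]_{U_2}) = \mgu([\theta_1]_{U_1},[\theta_2]_{U_2})$ with $U = U_1 \cup U_2$, which is exactly the set of variables carried by $\matchlp$. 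By the definition of $\rightslice$ on $\Linp$ it then suffices to show that, for every variable $v$, the $\omega$-sharing group $X := \theta^{-1}(v)_{|U}$ belongs to $\matchlp([S_1]_{U_1},[S_2]_{U_2})$.

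First I would set up the structural decomposition of the matched substitution $\theta$. Since $\match$ is defined, the remark following the matching definition gives $[\theta]_{U_1} = [\theta_1]_{U_1}$, so I may choose representatives with $\theta(x) = \theta_1(x)$ for all $x \in U_1$. Since $\theta = \mgu(\theta_1,\theta_2) \preceq_{U_2} \theta_2$, there is $\delta \in \Subst$ with $\theta(x) = \delta(\theta_2(x))$ for all $x \in U_2$; and because $\theta$ agrees with $\theta_1$ on $U_1 \supseteq U_1 \cap U_2$, the substitution $\delta$ sends every variable of $\theta_2$ occurring in some $\theta_2(x)$ with $x \in U_1 \cap U_2$ (the \emph{matched} variables) to a term over the variables of $\theta_1$, while it leaves the remaining (\emph{fresh}) variables of $\theta_2$ untouched. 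This dichotomy between matched and fresh variables is the hinge of the argument.

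Then I would verify the two membership conditions defining $\matchlp$ by computing $X$ on $U_1$ and on $U_2$. On $U_1$, since $\theta$ agrees with $\theta_1$, I get $X_{|U_1} = \theta_1^{-1}(v)_{|U_1} \in S_1$ from $[S_1]_{U_1} \rightslice [\theta_1]_{U_1}$. On $U_2$, since $\theta = \delta \circ \theta_2$ there, the composition law $(\delta \circ \theta_2)^{-1}(v) = \theta_2^{-1}(\delta^{-1}(v))$ and linearity of restriction give $X_{|U_2} = \biguplus_{w} (\theta_2^{-1}(w)_{|U_2})^{\delta^{-1}(v)(w)}$, a sum with multiplicities of groups $\theta_2^{-1}(w)_{|U_2} \in S_2$, ranging over the $w$ with $v \in \delta(w)$. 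I now split on $v$. If $v$ occurs in $\theta(x)$ for some $x \in U_1$, then $v$ is a variable of $\theta_1$, so every contributing $w$ must be matched (a fresh $w$ has $\delta(w) = w$, which cannot contain the $\theta_1$-variable $v$); a matched $w$ occurs in $\theta_2(x)$ for some $x \in U_1 \cap U_2$, whence $\theta_2^{-1}(w)_{|U_1} \neq \emptyset$ and thus $\theta_2^{-1}(w)_{|U_2} \in S''_2$. Therefore $X_{|U_2} \in (S''_2)^*$, and together with $X_{|U_1} \in S_1$ this places $X$ in the set-builder component of $\matchlp$. If instead $v$ occurs on no variable of $U_1$, then $v$ is a fresh variable of $\theta_2$, so $\delta^{-1}(v) = \multil v \multir$, giving $X = X_{|U_2} = \theta_2^{-1}(v)_{|U_2} \in S_2$ with $X_{|U_1} = \emptyset$, i.e. $X \in S'_2$. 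Either way $X$ lies in the result. The degenerate cases fit in without effort: if $v$ is absent from $\theta$ then $X = \emptymulti \in S'_2$, and if $v$ is a variable of $\theta_1$ not in the range of $\delta$ then the sum is empty, so $X_{|U_2} = \emptymulti \in (S''_2)^*$.

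I expect the principal obstacle to be making the structural decomposition fully rigorous: justifying representatives that simultaneously satisfy $\theta|_{U_1} = \theta_1$ and $\theta|_{U_2} = \delta \circ \theta_2$, and proving cleanly that a variable of $\theta_1$ can be produced by $\delta$ only through a matched variable of $\theta_2$ --- the step that forces the summands into $S''_2$ rather than merely into $S_2$, and hence that distinguishes matching from plain unification. The rest is routine bookkeeping with occurrence counts, the composition law, and the distribution of restriction over $\biguplus$.
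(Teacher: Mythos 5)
Your proposal is correct and takes essentially the same route as the paper's proof: the paper also reduces to showing $\theta^{-1}(v)_{|U_1\cup U_2} \in \matchlp([S_1]_{U_1},[S_2]_{U_2})$ for every $v$, establishes exactly your structural decomposition in the form $\mgu(\theta_1,\theta_2)=\theta_1 \uplus (\delta\circ(\theta_2)_{|U_2\setminus U_1})=(\theta_1)_{|U_1\setminus U_2}\uplus(\delta\circ\theta_2)$, and then performs the same case analysis on $v$, using the composition law $(\delta\circ\theta_2)^{-1}(v)=\theta_2^{-1}(\delta^{-1}(v))$ and the observation that every contributing $w$ lies in $\vars(\theta_2(U_1\cap U_2))$ to force the summands into $S''_2$ rather than merely $S_2$. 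The ``principal obstacle'' you flag is closed in the paper precisely by the initial without-loss-of-generality choices ($\dom(\theta_i)=U_i$ and $\vars(\theta_1)\cap\vars(\theta_2)\subseteq U_1\cap U_2$, whence $\rng(\theta_1)\cap\rng(\theta_2)=\emptyset$ and $\delta$ has domain $\vars(\theta_2(U_1\cap U_2))$), after which your matched/fresh dichotomy and the simultaneous agreements on $U_1$ and $U_2$ follow from the explicit equational derivation of the mgu.
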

\begin{proof}
  Given $[S_1]_{U_1} \rightslice [\theta_1]_{U_1}$ and $[S_2]_{U_2}
    \rightslice [\theta_2]_{U_2}$ such that
  $\match([\theta_1]_{U_1},[\theta_2]_{U_2})$ is defined, we need to prove
  that
  \[
    \matchlp([S_1]_{U_1},[S_2]_{U_2}) \rightslice
    \match([\theta_1]_{U_1},[\theta_2]_{U_2}) \enspace .
  \]

  Assume without loss of
  generality that $\dom(\theta_1)=U_1$, $\dom(\theta_2)=U_2$ and
  $\vars(\theta_1) \cap \vars(\theta_2) \subseteq U_1 \cap U_2$. In
  particular, this implies $\rng(\theta_1) \cap
    \rng(\theta_2)=\emptyset$.  By hypothesis $\theta_1 \preceq_{U_1 \cap U_2}
    \theta_2$, \ie there exists $\delta \in \Isubst$ such
  that $\theta_1(x)= \delta(\theta_2(x))$ for each $x \in U_1 \cap
    U_2$, $\dom(\delta)=\vars(\theta_2(U_1 \cap U_2))$ and
  $\rng(\delta)=\vars(\theta_1(U_1 \cap U_2))$. We have
  \begin{eqnarray*}
    &&\mgu(\theta_1,\theta_2)\\
    &=&\mgu(\{ \theta_2(x)=\theta_2(\theta_1(x)) \mid x \in U_1 \})
    \circ \theta_2\\
    &&\quad \text{[by assumptions on the $\theta_i$'s]}\\
    &=&\mgu(\{ \theta_2(x)=\theta_1(x) \mid x \in U_1 \})
    \circ \theta_2\\
    &=&\mgu(\{ \theta_2(x)=\theta_1(x) \mid x \in U_1 \cap U_2 \})
    \circ (\theta_1)_{|U_1 \setminus U_2} \circ \theta_2\\
    &&\quad \text{[since $\vars(\theta_2) \cap \vars((\theta_1)_{U_1
              \setminus U_2})=\emptyset]$}\\
    &=&\mgu(\{ \theta_2(x)=\delta(\theta_2(x)) \mid  x \in U_1  \cap U_2\})
    \circ ((\theta_1)_{|U_1 \setminus U_2} \uplus \theta_2)\\
    &=&\delta \circ ((\theta_1)_{|U_1 \setminus U_2} \uplus
    \theta_2)\\
    &&\quad \text{[since $\dom(\delta) \cap \vars((\theta_1)_{U_1
              \setminus U_2})=\emptyset]$}\\
    &=&(\theta_1)_{|U_1 \setminus U_2} \uplus (\delta \circ \theta_2)
    \enspace .
  \end{eqnarray*}
  With an analogous derivation, we obtain
  \[
    \mgu(\theta_1,\theta_2)= \theta_1 \uplus (\delta \circ
      (\theta_2)_{|U_2 \setminus U_1}) \enspace .
  \]
  Now, if $\eta=\mgu(\theta_1,\theta_2)$, we need to prove that, for
  each $v \in \Var$, $\eta^{-1}(v)_{|U_1 \cup U_2} \in
    \matchlp([S_1]_{U_1},[S_2]_{U_2})$. We distinguish several cases.
  \begin{itemize}
    \item $v \notin \rng(\theta_1)$ and $v \notin \rng(\theta_2)$. In
          this case $v \notin \vars(\delta)$ and $\eta^{-1}(v)_{|U_1 \cup
            U_2}=\multil \multir \in  \matchlp([S_1]_{U_1},[S_2]_{U_2})$.
    \item $v \in \rng(\theta_1)$ and $v \notin \vars(\theta_1(U_2))$. In
          this case $v \notin \rng(\theta_2)$ and $v \notin \vars(\delta)$.
          We have $\eta^{-1}(v)_{|U_1 \cup U_2} = \theta_1^{-1}(v)_{|U_1}
            \in S_1 \subseteq  \matchlp([S_1]_{U_1},[S_2]_{U_2})$.
    \item $v \in \rng(\theta_2)$ and $v \notin \vars(\theta_2(U_1))$. In
          this case $v \notin \rng(\theta_1)$ and $v \notin \vars(\delta)$.
          We have $\eta^{-1}(v)_{|U_1 \cup U_2} = \theta_2^{-1}(v)_{|U_2} \in
            S_2$. Since $v \notin \vars(\theta_2(U_1))$, then
          $\theta_2^{-1}(v)_{|U_2} \in S'_2 \subseteq
            \matchlp([S_1]_{U_1},[S_2]_{U_2})$.
    \item $v \in \vars(\theta_2(U_1 \cap U_2))$. In this case $v \notin
            \rng(\theta_1)$ and $v \in \dom(\delta)$, therefore
          $\eta^{-1}(v)_{|U_1 \cup U_2}=\multil \multir$.
    \item $v \in \vars(\theta_1(U_1 \cap U_2))$. Now, $v \in
            \rng(\delta)$ and $\eta^{-1}(v)_{|U_2}=\theta_2^{-1}(
            \delta^{-1}(v))_{|U_2}=\biguplus X$, where $X \in \mwp(S_2)$ and
          each sharing group $B \in X$ is of the form
          $\theta_2^{-1}(w)_{|U_2}$ for some $w \in \supp{\delta^{-1}(v)}$.
          Note that every such $w$ is an element of $\vars(\theta_2(U_1 \cap
              U_2))$, therefore $\theta_2^{-1}(w)_{|U_2} \in S''_2$, $X \in
            \mwp(S''_2)$ and $\biguplus X \in (S_2'')^*$. On the other side, since
          $\eta=\theta_1 \uplus
            (\delta \circ (\theta_2)_{|U_2 \setminus U_1})$, we have
          $\eta^{-1}(v)_{|U_1}=\theta_1^{-1}(v)_{|U_1} \in S_1$. Hence,
          $\eta^{-1}(v)_{|U_1
              \cup U_2} \in       \matchlp([S_1]_{U_1},[S_2]_{U_2})$.
  \end{itemize}
  This concludes the proof.
\end{proof}

Now, we may prove the optimality of $\matchlp$. Actually, we prove a
stronger property, a sort of weak completeness of $\matchlp$, which
will be used later to derive optimality.

\begin{example}
  Consider again the substitutions and sharing groups in Examples~\ref{ex:3}
  and~\ref{ex:4}. Recall that $U_1 = \{x, y, z\}$ and $U_2 = \{ u, v, x \}$.
  We have seen that, although the sharing groups $xz$, $u^2x^2$,
  and $ux^2$ are in $\matchlp([S_1]_{U_1},[S_2]_{U_2})$, they are not in
  $\alpha_\lp(\match([\theta_1]_{U_1}, [\theta_2]_{U_2}))$. If $\matchlp$ were
  optimal, we should be able to find, for each $B \in \{xz, u^2x^2, ux^2\}$,
  a pair of substitutions $\theta_3$ and $\theta_4$ such that
  $[S_1]_{U_1} \rightslice [\theta_3]_{U_1}$, $[S_2]_{U_2} \rightslice
    [\theta_4]_{U_2}$ and $B \in \alpha_\lp(\match([\theta_3]_{U_1},
    [\theta_4]_{U_2})$. Actually, we can do better, keep $\theta_4=\theta_2$
  fixed, and only change $\theta_3$ for different sharing groups.

  Consider $B=xz$. Note that $B_{|U_1}=xz$ and $B_{|U_2}= x = \Multisum \mathcal X$, with
  $\mathcal X= \multil x \multir$. The sharing group $x$ in $\theta_2$ is generated by
  the variables $w_5$ and $w_6$. Consider the substitution $\theta_3 = \{ x/r(a,
    w, a, a, a), y/a, z/w \}$ where
  \begin{itemize}
    \item the binding $x/r(a,w,a, a, a)$ is obtained by the corresponding binding
          in $\theta_2$ replacing $w_5$ with a fresh variable $w$ and all the other
          variables  in the range with the constant symbol $a$;
    \item the bindings $\{ y/a, z/w \}$ are chosen according to $B_{|U_1 \setminus
                U_2}=z$.
  \end{itemize}
  It is immediate to show that $xz \in \alpha_\lp(\mgu([\theta_3]_{U_1},
      [\theta_2]_{U_2}))$ and $[S_1]_{U_1} \rightslice [\theta_3]_{U_1}$.

  As an another example, let us consider $B=u^2x^2$. In this case $B_{|U_1}=x^2
    \in S_1$ and $B_{|U_2} = u^2x^2 = \Multisum \multil ux, ux \multir$. The
  variable which generates the sharing group $ux$ is $w_4$. We proceed as before, and obtain
  $\theta_3 = \{ x/r(r(w,w),a,a,a,a), y/a, z/a\}$. Note that $w_4$ has been
  replaced with $r(w,w)$ since two copies of $ux$ are needed to obtain $u^2x^2$.
  Again  $[S_1]_{U_1} \rightslice [\theta_3]_{U_1}$ and $u^2x^2 \in
    \alpha_\lp(\mgu([\theta_3]_{U_1},  [\theta_2]_{U_2}))$.
\end{example}

We distill the idea presented above in the following result.

\begin{lemma}[Completeness on the second argument]
  \label{lem:matchcompl}
  Given $[S_1]_{U_1} \in \Linp$ and $[\theta_2]_{U_2} \in \Isubst_\sim$, there
  exist $\delta_1,\dots,\delta_n \in  \Isubst_\sim$ such that for all $i\in
    \{1,\dots,n\}$,   $[S_1]_{U_1} \rightslice [\delta_i]_{U_1}$
  and
  \[\matchlp([S_1]_{U_1}, \alpha_\lp([\theta_2]_{U_2}))= \left[\bigcup_ {i\in
        \{1,\ldots,n\}} \alpha_\omega(\match([\delta_i]_{U_1},
      [\theta_2]_{U_2}))\right]_{U_1\cup U_2} \]
\end{lemma}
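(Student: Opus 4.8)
The plan is to establish the stated identity as two containments between sets of $\omega$-sharing groups over $U_1 \cup U_2$, exploiting that $\leq_\omega$ is just set inclusion and that $\matchlp$ is additive in its first argument, so it suffices to reason one group $B \in S_1$ at a time. I would first record that the left-hand side is a \emph{finite} set: every ``interesting'' group $X$ satisfies $X_{|U_1} = B \in S_1$ and $X_{|U_2} = \Multisum \mathcal X$ with $\mathcal X \in \mwp(S''_2)$, and since every summand of $S''_2$ meets $U_1$ and thus contributes at least one occurrence to $U_1 \cap U_2$, the number of summands is bounded by the total multiplicity of $B_{|U_1 \cap U_2}$; hence only finitely many witnesses $\delta_i$ are needed. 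Throughout I assume $S_1 \neq \emptyset$, so that $\emptymulti \in S_1$; when $S_1 = \emptyset$ the class $[S_1]_{U_1}$ concretises to nothing and the statement is degenerate.

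The containment $\supseteq$ is immediate from correctness. For any $\delta_i$ with $[S_1]_{U_1} \rightslice [\delta_i]_{U_1}$ and $\match([\delta_i]_{U_1},[\theta_2]_{U_2})$ defined, Theorem~\ref{th:matchcorrect} applied with the second argument $\alpha_\lp([\theta_2]_{U_2}) \rightslice [\theta_2]_{U_2}$ yields $\alpha_\omega(\match([\delta_i]_{U_1},[\theta_2]_{U_2})) \leq_\omega \matchlp([S_1]_{U_1}, \alpha_\lp([\theta_2]_{U_2}))$, so any admissible family of witnesses produces a union contained in the left-hand side. The whole content is therefore the reverse containment: to produce, for each group $X$ on the left, a substitution $\delta_X$ with $[S_1]_{U_1} \rightslice [\delta_X]_{U_1}$, with $\match([\delta_X]_{U_1},[\theta_2]_{U_2})$ defined, and with $X \in \alpha_\omega(\match([\delta_X]_{U_1},[\theta_2]_{U_2}))$.

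For the construction I normalise so that $\dom(\theta_2) = U_2$ and introduce a single fresh variable $w$ destined to become the common variable witnessing $X$. Writing $X_{|U_2} = \Multisum \mathcal X$, I choose for each $C \in \supp{\mathcal X}$ an existential variable $t_C$ of $\theta_2$ with $\theta_2^{-1}(t_C)_{|U_2} = C$ (such $t_C \in \vars(\theta_2(U_1 \cap U_2))$ exist precisely because $C \in S''_2$ meets $U_1$), and I define a witness $\lambda$ by letting $\lambda(t_C)$ be a term containing $w$ exactly $\mathcal X(C)$ times, $\lambda(t) = a$ for all remaining variables of $\theta_2(U_1 \cap U_2)$, and finally $\delta_X(x) = \lambda(\theta_2(x))$ for $x \in U_1 \cap U_2$ together with $\delta_X(x)$ a term containing $w$ exactly $B(x)$ times for $x \in U_1 \setminus U_2$. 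Here the hypothesis that the signature has a constant and a symbol of arity at least two is used to realise any prescribed number of occurrences of $w$ and to ground away every other variable. By construction $\delta_X \preceq_{U_1 \cap U_2} \theta_2$, so the match is defined; the only non-empty group of $\delta_X$ restricted to $U_1$ is $\delta_X^{-1}(w)_{|U_1} = B$, giving $\alpha_\lp([\delta_X]_{U_1}) = [B]_{U_1} \leq_\omega [S_1]_{U_1}$; and computing $\eta = \mgu(\delta_X,\theta_2) = \delta_X \multisum (\lambda \circ (\theta_2)_{|U_2 \setminus U_1})$ as in the proof of Theorem~\ref{th:matchcorrect} gives $\eta^{-1}(w)_{|U_1} = B$ and $\eta^{-1}(w)_{|U_2} = X_{|U_2}$, hence $\eta^{-1}(w)_{|U_1 \cup U_2} = X$. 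The groups of $S'_2$ are covered for free: for the witness associated with $B = \emptymulti$ (a ground $\delta$), every variable of $\theta_2$ not touching $U_1$ survives the matching unchanged, so each group of $S'_2$ reappears in the corresponding $\alpha_\omega(\match([\delta]_{U_1},[\theta_2]_{U_2}))$.

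I expect the main obstacle to be this reverse engineering step: recovering from an abstract group $X$ a \emph{single} concrete substitution $\delta_X$ whose matching against $\theta_2$ produces $X$ as one genuine sharing group, while keeping $\alpha_\lp([\delta_X]_{U_1})$ pinned down to the single group $B$ so that the constraint $[S_1]_{U_1} \rightslice [\delta_X]_{U_1}$ survives. The delicate points are decoding $X_{|U_2} \in (S''_2)^*$ into a multiplicity assignment over the existential variables of $\theta_2$ (legitimate because the summands of $(S''_2)^*$ are exactly the $\theta_2^{-1}(t)_{|U_2}$ that meet $U_1$), and checking that the two prescriptions $\eta^{-1}(w)_{|U_1} = B$ and $\eta^{-1}(w)_{|U_2} = X_{|U_2}$ agree on $U_1 \cap U_2$ — which holds exactly because $X_{|U_1} = B$ and $X_{|U_2} \in (S''_2)^*$ are restrictions of the same $X$. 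The verification of the $\eta$ computation leans on the $\mgu$ decomposition and the composition law $(\eta \circ \theta)^{-1}(B) = \theta^{-1}(\eta^{-1}(B))$ already recalled.
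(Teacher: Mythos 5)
Your proposal is correct and takes essentially the same route as the paper's own proof: one inclusion follows from correctness (Theorem~\ref{th:matchcorrect}), and for the other you build, for each group $X$ in the abstract result, a witness substitution from a single fresh variable whose multiplicities are dictated by the decomposition $X_{|U_2}=\Multisum \mathcal X$ over $S''_2$ and by $X$ on $U_1 \setminus U_2$, grounding everything else and verifying via the same $\mgu$ decomposition --- which is precisely the paper's second case, while your ground witness for $B=\emptymulti$ reproduces its first case covering $S'_2$. The only deviations are cosmetic: you fold the $S'_2$ case into the empty-group witness and add an explicit finiteness remark (which, like the lemma's statement itself, implicitly presupposes that only finitely many groups $B \in S_1$ are relevant), so this matches the paper's argument in both structure and substance.
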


\begin{proof}
  Since we already know that $\matchlp$ is a correct abstraction of
  $\match$, we only need to prove that, given $[S_1]_{U_1} \in \Linp$ and
  $[\theta_2]_{U_2} \in \Isubst_\sim$, for any
  sharing group $B \in \matchlp([S_1]_{U_1},
    \alpha_\lp([\theta_2]_{U_2}))$, there exists $[\theta_1]_{U_1} \in
    \Isubst_\sim$ such that $[S_1]_{U_1} \rightslice [\theta_1]_{U_1}$
  and $B \in \alpha_\lp(\match([\theta_1]_{U_1},
      [\theta_2]_{U_2}))$.

  In order to ease notation, let $U=U_1 \cup U_2$, $[S_2]_{U_2}=\alpha_\lp([\theta_2]_{U_2})$ and
  $[S]_U=\matchlp([S_1]_{U_1}, [S_2]_{U_2})$. We may choose $\theta_2$
  such that $\dom(\theta_2)=U_2$ without loss of generality. Moreover,
  $S'_2$ and $S''_2$ are given as in the definition
  of abstract matching. We distinguish two cases.

  \begin{description}
    \item[first case)]
      If $B \in S'_2$, there exists $v \in \Var$ such that
      $\theta_2^{-1}(v)_{|U_2}=B$. Let $X=\vars(\theta_2(U_1 \cap U_2))$
      and take $\delta=\{ x/a \mid x \in X \}$. Then $\theta_1=(\delta
        \circ \theta_2)_{|U_1} \uplus \{ x/a \mid x \in U_1 \setminus U_2\}$
      is such that $\theta_1^{-1}(v)_{|U_1}= \multil \multir$ for each
      $v \in \Var$, therefore $[S_1]_{U_1} \rightslice [\theta_1]_{U_1}$.
      Moreover, since $\theta_1 \preceq_{U_1 \cap U_2} \theta_2$, we have
      that $\matchlp([\theta_1]_{U_1}, [\theta_2]_{U_2})$ is defined equal
      to $\mgu([\theta_1]_{U_1}, [\theta_2]_{U_2})$ and
      $\mgu([\theta_1]_{U_1},
        [\theta_2]_{U_2})=[\mgu(\theta_1,\theta_2)]_U$ since
      $\vars(\theta_1)\cap \vars(\theta_2) \subseteq U_1 \cap U_2$.  By
      the proof of Theorem \ref{th:matchcorrect}, we have that
      $\mgu(\theta_1,\theta_2)=(\theta_1)_{|U_1 \setminus U_2} \uplus
        (\delta \circ \theta_2)$.  Since $B_{|U_1}=\multil \multir$, then $v
        \notin X=\vars(\delta)$, and therefore
      $\mgu(\theta_1,\theta_2)^{-1}(v)_{|U}=\theta_2^{-1}(v)_{|U}=B$.
      Hence, $B$ is an $\omega$-sharing group in
      $\alpha_\lp(\match([\theta_1]_{U_1},
          [\theta_2]_{U_2}))$ which is what we wanted to prove.

    \item[second case)] We now assume $B_{|U_1} \in S_1$ and $B_{|U_2}=\biguplus
        X$ with $\mathcal X   \in \mwp(S''_2)$. Then, for each $H \in \supp{\mathcal  X}$, there exists
      $v_H
        \in \Var$ such that $\theta_2^{-1}(v_H)_{|U_2}=H$.  Since $H \cap
        U_1 \neq \multil \multir$ for each $H \in \supp{\mathcal X}$, then $v_H \in
        Y=\vars(\theta_2(U_1 \cap U_2))$. Consider the
        substitutions
      \begin{align*}
        \delta & =\{ v_H/t(\underbrace{v,\ldots,v}_{\text{$\mathcal X(H)$ times}})
        \mid H \in \supp{\mathcal X} \} \uplus \{ y/a \mid y \in Y \text{ and }
        \forall  H \in \supp{\mathcal X}. y \neq v_H\} \enspace ,                  \\
        \eta   & =\{ w/t(\underbrace{v, \ldots,v}_{\text{$B(w)$ times}}) \mid w
        \in U_1 \setminus U_2 \} \enspace ,
      \end{align*}
      for a fresh variable $v$. Let us define $\theta_1=\eta \uplus
        (\delta \circ \theta_2)_{|U_1}$. We want to prove that $[S_1]_{U_1}
        \rightslice [\theta_1]_{U_1}$. Note that
      $\vars(\theta_1(U_1))=\{v\}$, hence we only need to check that
      $\theta^{-1}_1(v)_{|U_1} \in S_1$. We have that
      $\theta^{-1}(v)_{|U_1}=\eta^{-1}(v)_{|U_1} \uplus
        (\theta_2^{-1}\delta^{-1}(v))_{|U_1} = \eta^{-1}(v)_{|U_1} \uplus
        \theta_2^{-1}( \multil
        v_{H}^{X(H)} \mid H \in \supp{\mathcal X} \multir)_{|U_1}=B_{|U_1
        \setminus U_2} \uplus (\biguplus \mathcal X)_{|U_1 \cap U_2}=B_{|U_1}
        \uplus B_{|U_1 \cap U_2}=B_{|U_1}$ which, we know, is an element of
      $S_1$.  Moreover, since $\theta_1 \preceq_{U_1 \cap U_2} \theta_2$ and
      $\vars(\theta_1) \cap \vars(\theta_2) \subseteq U_1 \cap U_2$, we
      have that $\matchlp([\theta_1]_{U_1},[\theta_2]_{U_2})=
        [\mgu(\theta_1,\theta_2)]_U$. If we define $\theta=\mgu(\theta_1,
        \theta_2)$, by looking at the proof of Theorem \ref{th:matchcorrect}
      we have that $\theta=\theta_1 \uplus (\delta \circ (\theta_2)_{|U_2
          \setminus U_1})$ and $\theta=(\theta_1)_{|U_1 \setminus U_2} \uplus
        (\delta \circ \theta_2)$. By the first equality, it is immediate to
      check that $\theta^{-1}(v)_{|U_1} = \theta_1^{-1}(v)_{|U_1}=
        B_{|U_1}$.  By the second equality, $\theta^{-1}(v)_{|U_2}=
        \theta_2^{-1}(\multil v_{H}^{\mathcal X(H)} \mid H \in \supp{\mathcal X} \multir)_{|U_2} =
        \biguplus \mathcal X =B_{|U_2}$.  Therefore, $\theta^{-1}(v)_{|U}=B$.
    \end{description}

    It is worth noting that, although this proof uses a function symbol of arbitrary arity, it may be easily rewritten using only one constant symbol and one function symbol of arity at least two, as required at the beginning of Section~\ref{sec:notation}.
\end{proof}

\begin{theorem}[Optimality of $\matchlp$]
  \label{th:matchlpopt}
  The operation $\matchlp$ is optimal  w.r.t. $\match$.
\end{theorem}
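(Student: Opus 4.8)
The plan is to reduce the claim to Lemma~\ref{lem:matchcompl} together with the best abstraction assumption, so that no new combinatorial work is needed. Since $\matchlp$ takes two abstract arguments, optimality means: for every correct abstraction $\match'$ of $\match$ and every $[S_1]_{U_1},[S_2]_{U_2}\in\Linp$,
\[
\matchlp([S_1]_{U_1},[S_2]_{U_2}) \leq_\omega \match'([S_1]_{U_1},[S_2]_{U_2}) .
\]
Because $\leq_\omega$ is just set inclusion once the variables of interest coincide, and every object below will carry $U=U_1\cup U_2$, it is enough to show that each $\omega$-sharing group produced by $\matchlp$ is forced into $\match'([S_1]_{U_1},[S_2]_{U_2})$.

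First I would fix a correct $\match'$ and a pair of inputs, and use surjectivity of $\alpha_\lp$ (which is exactly what the presence of $\emptymulti$ in every non-empty $S$ buys us) to choose a representative $[\theta_2]_{U_2}\in\Isubst_\sim$ with $\alpha_\lp([\theta_2]_{U_2})=[S_2]_{U_2}$, so that $[S_2]_{U_2}\rightslice[\theta_2]_{U_2}$. Lemma~\ref{lem:matchcompl} then supplies substitutions $\delta_1,\dots,\delta_n$ with $[S_1]_{U_1}\rightslice[\delta_i]_{U_1}$, all matchings $\match([\delta_i]_{U_1},[\theta_2]_{U_2})$ defined, and
\[
\matchlp([S_1]_{U_1},[S_2]_{U_2}) = \Big[\,\bigcup_{i\in\{1,\dots,n\}} \alpha_\omega(\match([\delta_i]_{U_1},[\theta_2]_{U_2}))\,\Big]_{U} .
\]

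The core step is a single application of correctness, index by index. For each $i$, from $[S_1]_{U_1}\rightslice[\delta_i]_{U_1}$ and $[S_2]_{U_2}\rightslice[\theta_2]_{U_2}$, correctness of $\match'$ gives $\match'([S_1]_{U_1},[S_2]_{U_2})\rightslice\match([\delta_i]_{U_1},[\theta_2]_{U_2})$, and the best abstraction assumption $a\rightslice c \iff \alpha(c)\leq_A a$ turns this into $\alpha_\omega(\match([\delta_i]_{U_1},[\theta_2]_{U_2}))\leq_\omega\match'([S_1]_{U_1},[S_2]_{U_2})$. Since each defined match yields a result with variables of interest $U$, the relation $\rightslice$ forces $\match'([S_1]_{U_1},[S_2]_{U_2})$ itself to carry $U$, so these are genuine inclusions into the same $S'$ where $\match'([S_1]_{U_1},[S_2]_{U_2})=[S']_U$. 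Taking the union over $i$ and inserting the identity from the lemma yields $\matchlp([S_1]_{U_1},[S_2]_{U_2})\leq_\omega\match'([S_1]_{U_1},[S_2]_{U_2})$, which is optimality; correctness of $\matchlp$ is already Theorem~\ref{th:matchcorrect}.

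I do not expect a genuine obstacle here, since Lemma~\ref{lem:matchcompl} has already done the hard work of exhibiting concrete witnesses for every abstract sharing group in the output. The only points demanding care are bookkeeping ones: justifying that an arbitrary second argument $[S_2]_{U_2}$ can be realized as $\alpha_\lp([\theta_2]_{U_2})$ (surjectivity, with the bottom element $[\emptyset]_{U_2}$ handled as the trivial case), ensuring every $\match([\delta_i]_{U_1},[\theta_2]_{U_2})$ is defined so that correctness of $\match'$ is not vacuous, and checking that all objects share the variables of interest $U_1\cup U_2$ so that $\leq_\omega$ collapses to set inclusion and the union over $i$ is monotone.
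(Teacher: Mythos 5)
There is a genuine gap at the very first step: your claim that an arbitrary $[S_2]_{U_2}\in\Linp$ can be realized \emph{exactly} as $\alpha_\lp([\theta_2]_{U_2})$ for a single substitution $\theta_2$ is false. The abstraction of one substitution contains only finitely many distinct $\omega$-sharing groups, since $\theta_2^{-1}(v)_{|U_2}\neq\emptymulti$ for only finitely many $v\in\Var$; but $\Linp$ deliberately admits elements $[S_2]_{U_2}$ with $S_2$ an arbitrary, possibly infinite, subset of $\mwp(U_2)$ (such elements arise unavoidably, e.g.\ as $\gamma_\an$-images like $\gamma_\an([xy^\infty]_{xy})$). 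The paper's remark that $\emptymulti$ is required ``in order to have a surjective abstraction function'' does not rescue you: the presence of $\emptymulti$ is necessary for surjectivity considerations, but the image of $\alpha_\lp$ on single elements of $\Isubst_\sim$ consists only of the \emph{finite} elements containing $\emptymulti$. So whenever $S_2$ is infinite there is no $\theta_2$ with $\alpha_\lp([\theta_2]_{U_2})=[S_2]_{U_2}$, Lemma~\ref{lem:matchcompl} cannot be invoked with the second argument you want, and your chain of equalities collapses.

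The repair — which is exactly the paper's proof — is to localize the witness construction to one sharing group at a time: optimality only requires, for each $B\in\matchlp([S_1]_{U_1},[S_2]_{U_2})$, \emph{some} pair $\theta_1,\theta_2$ with $[S_1]_{U_1}\rightslice[\theta_1]_{U_1}$ and $[S_2]_{U_2}\rightslice[\theta_2]_{U_2}$ such that $B\in\alpha_\lp(\match([\theta_1]_{U_1},[\theta_2]_{U_2}))$. Writing $B_{|U_2}=\biguplus\mathcal X$ with $\mathcal X\in\mwp(S''_2)$, the support $\supp{\mathcal X}=\{H_1,\ldots,H_n\}$ is finite, so one builds a concrete $\theta_2$ with $\dom(\theta_2)=U_2$ and $\theta_2(u)=t(v_{H_1},\ldots,v_{H_1},\ldots,v_{H_n},\ldots,v_{H_n})$, each $v_{H_j}$ repeated $H_j(u)$ times; this gives $[\supp{\mathcal X}]_{U_2}\leq\alpha_\lp([\theta_2]_{U_2})\leq[S_2]_{U_2}$. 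The crucial observation is that correctness of an arbitrary $\match'$ only needs the \emph{under}-approximation $[S_2]_{U_2}\rightslice[\theta_2]_{U_2}$, never exact realization, while the lower bound $[\supp{\mathcal X}]_{U_2}\leq\alpha_\lp([\theta_2]_{U_2})$ guarantees $B\in\matchlp([S_1]_{U_1},\alpha_\lp([\theta_2]_{U_2}))$, so that Lemma~\ref{lem:matchcompl} supplies the matching $\theta_1$. Your closing step — correctness of $\match'$ together with the best-approximation equivalence $a\rightslice c\iff\alpha(c)\leq_A a$, forcing each such $B$ into $\match'([S_1]_{U_1},[S_2]_{U_2})$ — is sound and is essentially how the paper concludes; only the second concrete argument must be constructed per sharing group, finitely, rather than once and exactly.
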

\begin{proof}
  It is enough to prove that for each $[S_1]_{U_1}, [S_2]_{U_2} \in \Linp$ and $B \in \matchlp([S_1]_{U_1},[S_2]_{U_2})$, there are substitutions $\theta_1$, $\theta_2$ such that $[S_1]_{U_1} \rightslice [\theta_1]_{U_1}$,  $[S_2]_{U_21} \rightslice [\theta_2]_{U_2}$ and $B \in \alpha_\lp(\match([\theta_1]_{U_1}, [\theta_2]_{U_2})$.
  Assume $B_{|U_2}=\biguplus \mathcal X$ where $\mathcal X \in \mwp(S''_2)$. Consider a
  substitution $[\theta_2]_{U_2}$ such that $[\supp{\mathcal X}]_{U_2} \leq \alpha_\lp([\theta_2]_{U_2}) \leq [S_2]_{U_2}$.
  It means that, for each $H \in \supp{\mathcal X}$,
  there is $v_H \in \Var$ such that $\theta_2^{-1}(v_H)_{|U_2}=H$. If $\supp{\mathcal X}=\{ H_1, \ldots, H_n \}$, we may define $\theta_2$ as the substitution with $\dom(\theta_2)=U_2$ and, for each $u \in U_2$,
  \[
    \theta_2(u) = t( \underbrace{v_{H_1}, \ldots, v_{H_1}}_{H_1(u) \text{ times}}, \underbrace{v_{H_2}, \ldots, v_{H_2}}_{H_2(u) \text{ times}}, \ldots,  \underbrace{v_{H_n}, \ldots, v_{H_n}}_{H_n(u) \text{ times}} ) \enspace .
  \]

  By Lemma~\ref{lem:matchcompl}, there exists $[\theta_1]_{U_1}$ such that
  $[S_1]_{U_1} \rightslice [\theta_1]_{U_1}$ and $B
    \in \alpha_\lp(\match([\theta_1]_{U_1}, [\theta_2]_{U_2}))$.
  Therefore,  $\matchlp$ is the optimal approximation of $\match$.
\end{proof}

\section[Abstract matching over ShLin\textasciicircum 2]{Abstract matching over $\ShLinp$}
\label{sec:shlin2}

The domain $\Linp$ has been inspired by the domain $\ShLinp$, appeared for the
first time in \cite{King94-esop}.
The novelty of $\ShLinp$ was to embed
linearity information inside the sharing groups, instead of keeping them
separate like it was in $\Sharing \times \Lin$.

\subsection[The domain ShLin\textasciicircum 2]{The domain $\ShLinp$}
Here we recall the main  definitions for the
domain $\ShLinp$, viewed as an abstraction of $\Linp$, following the
presentation  given in \cite{AmatoS10-tplp}.

The idea is to simplify the domain $\Linp$ by only recording whether a
variable in a sharing group is linear or not, but forgetting its actual
multiplicity. Intuitively, we abstract an $\omega$-sharing group by replacing
any exponent equal to or greater than $2$  with a new symbol $\infty$.

A \emph{2-sharing group} is a map $o: \Var \fun \{0,1,\infty\}$ such that
its support $\supp{o}=\{ v \in \Var \mid o(v) \neq 0\}$ is finite.
We use a polynomial notation for 2-sharing groups as for $\omega$-sharing
groups. For instance, $o=xy^\infty z$ denotes
the 2-sharing group whose support is
$\supp{o}=\{x,y,z\}$, such that $o(x)=o(z)=1$ and $o(y)=\infty$. We denote
with $\emptyset$ the 2-sharing group with empty support and by $\Andysh(V)$
the set of 2-sharing groups whose support is a subset of $V$.  Note that in
\cite{King94-esop} the number $2$ is used as an exponent instead of $\infty$, but
we prefer our notation to be coherent with $\omega$-sharing groups.

An $\omega$-sharing group $B$ may be abstracted into the 2-sharing group
$\alpha_\an(B)$ given by
\begin{equation}
  \alpha_\an(B)=\lambda v\in \Var .
  \begin{cases}
    B(v)   & \text{if $B(v)\leq 1$,} \\
    \infty & \text{otherwise.}
  \end{cases}
\end{equation}
For instance, the  $\omega$-sharing groups $xy^{2} z,xy^{3} z,xy^{4} z,xy^{5}
  z,\ldots$ are all abstracted into $xy^\infty z$.

%
%
%
%
%
%

There are operations on 2-sharing groups which correspond to variable
projection and  multiset union. For projection
\begin{equation}
  o_{|V} = \lambda v \in \Var. \begin{cases}
    o(V) & \text{if $v \in V$,} \\
    0    & \text{otherwise,}
  \end{cases}
\end{equation}
while for multiset union
\begin{equation}
  o \andybin o' = \lambda v \in \Var. o(v) \oplus o'(v) \enspace ,
\end{equation}
where $0 \oplus x= x \oplus 0=x$ and $\infty \oplus x=x \oplus
  \infty=1 \oplus 1= \infty$. We will use $\Andybin \multil o_1, \ldots,
  o_n \multir$ for $o_1 \andybin \cdots \andybin o_n$. Given a sharing
group $o$, we also define the \emph{delinearization} operator:
\begin{equation}
  \label{eq:delinearization}
  o^2=o \andybin o  \enspace .
\end{equation}
This operator is extended  pointwise to sets and multisets. The next
proposition shows some properties of these operators.

\begin{proposition}
  \label{prop:abstraction2}
  Given an $\lp$-sharing group $B$, a set of variables $V$ and multiset of $\omega$-sharing groups $\mathcal X$, the following properties hold:
  \begin{enumerate}
    \item $\supp{B}= \supp{\alpha_\an(B)}$
    \item $\alpha_\an(B_{|V})=  \alpha_\an(B)_{|V}$
    \item \label{eq:abstraction1} $\alpha_\an(\biguplus \mathcal X)=
            \biguplus \alpha_\an(\mathcal X)$
    \item $\alpha_\an(B \uplus B) = \alpha_\an(B)^2$
  \end{enumerate}
  where $\alpha_\an(\mathcal X)$ is just the elementwise extension of $\alpha_\an$ to a multiset of $\omega$-sharing groups.
\end{proposition}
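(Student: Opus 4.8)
The plan is to reduce all four identities to pointwise statements about the exponent of a single variable, exploiting that $\alpha_\an$, projection, the multiset sum $\uplus$ and the union $\andybin$ all act independently on each coordinate. With this view, properties~1 and~2 are immediate. For~1, note that $\alpha_\an$ fixes the exponents $0$ and $1$ and sends every exponent $\geq 2$ to $\infty$; hence $\alpha_\an(B)(v)=0$ if and only if $B(v)=0$, so $\supp{B}$ and $\supp{\alpha_\an(B)}$ consist of the same variables. For~2, I would evaluate both sides at an arbitrary $v$: if $v \in V$ both reduce to $\alpha_\an$ applied to $B(v)$, and if $v \notin V$ both are $0$; since neither projection nor $\alpha_\an$ lets distinct coordinates interact, the two maps coincide.

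The substance of the proposition is property~3, from which~4 will follow at once. The key observation I would isolate is that, read as a map on exponents, $\alpha_\an$ is a monoid homomorphism from $(\Nat,+)$ to $(\{0,1,\infty\},\oplus)$: it sends $0$ to $0$ and satisfies $\alpha_\an(m+n)=\alpha_\an(m)\oplus\alpha_\an(n)$. I would prove the latter by splitting on the value of $m+n$. If $m+n\leq 1$, then at most one of $m,n$ equals $1$, so $\alpha_\an(m+n)=m+n=\alpha_\an(m)\oplus\alpha_\an(n)$. If $m+n\geq 2$, then $\alpha_\an(m+n)=\infty$ and the right-hand side is $\infty$ as well, because either some summand is already $\geq 2$, so its image is the absorbing element $\infty$, or both summands equal $1$ and $1\oplus 1=\infty$. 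This last subcase is the only delicate point of the whole proof: it is exactly the situation in which two linear occurrences merge into a non-linear one, and it is what motivates the clause $1\oplus 1=\infty$ in the definition of $\andybin$.

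Lifting the homomorphism coordinatewise gives the binary identity $\alpha_\an(B \uplus B')=\alpha_\an(B)\andybin\alpha_\an(B')$ (apply the pointwise statement with $m=B(v)$ and $n=B'(v)$ at each $v$), together with $\alpha_\an(\emptymulti)=\emptyset$. Property~3 then follows by induction on the cardinality of $\mathcal X$: the empty case is $\alpha_\an(\emptymulti)=\emptyset$, and the inductive step removes one group from $\mathcal X$ and uses the binary identity, relying on the associativity and commutativity of $\uplus$ and $\andybin$ and on $\emptyset$ being the unit of $\andybin$. Finally, property~4 is the instance of~3 (equivalently, of the binary identity) with $\mathcal X=\multil B,B\multir$, since $\alpha_\an(B)^2$ is by definition $\alpha_\an(B)\andybin\alpha_\an(B)$. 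I expect essentially all the difficulty to sit in the exponent-level case analysis above; the coordinatewise lifting and the induction are routine.
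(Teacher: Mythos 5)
Your proof is correct, and for Properties~1 and~2 it coincides with the paper's: both argue pointwise, the paper handling Property~2 by the same three-case split on $v$ ($v \notin V$, $v \in V \setminus \supp{B}$, $v \in V \cap \supp{B}$). The genuine difference is Property~3, which the paper does not prove at all---it simply cites \cite{AmatoS10-tplp}---whereas you supply a self-contained argument: you isolate the observation that, read on exponents, $\alpha_\an$ is a monoid homomorphism from $(\Nat,+)$ to $(\{0,1,\infty\},\oplus)$, verify it by the case split on $m+n$ (with $1 \oplus 1 = \infty$ as the only delicate subcase, which you correctly identify as the point where two linear occurrences merge into a non-linear one), lift this coordinatewise to the binary identity $\alpha_\an(B \uplus B') = \alpha_\an(B) \andybin \alpha_\an(B')$, and conclude by induction on the cardinality of $\mathcal X$---well-founded here, since multisets in this paper have finite support and finite multiplicities---using associativity and commutativity of $\andybin$ with unit $\emptyset$. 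Both texts then obtain Property~4 as an immediate instance of Property~3 (your instantiation $\mathcal X = \multil B, B \multir$ matches the paper's remark that it is a trivial consequence). What your route buys is independence from the external reference and an explicit account of why the clause $1 \oplus 1 = \infty$ in the definition of $\andybin$ is forced; what the paper buys is brevity, at the cost of making the proposition's central claim rest on a result proved elsewhere.
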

\begin{proof}
  Given the $\omega$-sharing group $B$, we have $\supp{\alpha_\an(B)}=
    \{ x \in \Var \mid \alpha_\an(B) \neq 0 \} = \{ x \in \Var \mid B(x) \neq 0 \}
    =  \supp{B}$, which proves Property~1. For the Property~2, given $V \subseteq
    \Var$,  we want to prove that $\alpha_\an(B_{|V})(v)= (\alpha_\an(B)_{|V})(v)$ for
  each $v  \in \Var$. The property is easily proved considering the three cases
  $v \notin V$, $v \in V \setminus \supp{B}$ and $v \in V \cap \supp{B}$.
  Property~3 has been proved in \cite{AmatoS10-tplp}. Property~4 is a trivial
  consequence of Property~3.
\end{proof}

Since we do not want to represent definite non-linearity, we introduce an
order relation over sharing groups as
follows:
\[
  o \leq o' \iff \supp{o}=\supp{o'} \wedge
  \forall x \in \supp{o}.\ o(x) \leq o'(x)  \enspace ,
\]
and we restrict our attention to downward closed sets of sharing
groups. The domain we are interested in is the following:
\[
  \ShLinp=\bigl\{ [T]_U \mid T \in \partof_{\downclo}(\Andysh(U)),
  U \in \fwp(\Var), T \neq \emptyset \Rightarrow \emptyset \in T \bigr \}
  \enspace ,
\]
where $\partof_{\downclo}(\Andysh(U))$ is the powerset of downward closed
subsets of $\Andysh(U)$ according to $\leq$ and $[T_1]_{U_1} \leq_\an
  [T_2]_{U_2}$ iff $U_1=U_2$ and $T_1 \subseteq T_2$.
For instance, the set $\{xy^\infty z\}$ is not downward closed, while
$\{xyz,xy^\infty z\}$ is downward closed.
There is a Galois
insertion of $\ShLinp$ into $\Linp$ given by the pair of adjoint maps
$\gamma_\an: \ShLinp \fun \Linp$  and  $\alpha_\an: \Linp \fun \ShLinp$:
\begin{align*}
  \gamma_\an([T]_U)=\left[ \alpha_\an^{-1}(T) \right]_U
  \qquad
  \alpha_\an([S]_U)=\left[ \downclo \alpha_\an(S) \right]_U
  \enspace ,
\end{align*}
where $\downclo T = \{o \mid o' \in T, o \leq o'\}$ is the downward closure of
$T$. With an abuse of notation, we also apply $\gamma_\an$ and $\alpha_\an$ to
subsets of $\omega$-sharing groups and 2-sharing groups
respectively, by ignoring the set of variables of interest. For instance,
$\gamma_\an([\emptyset, xyz,xy^\infty z]_{x,y,z})= [ \multil\multir, xyz, xy^{2} z,xy^{3} z,xy^{4} z,xy^{5}
  z,\ldots ]_{x,y,z}$.  Moreover, we write $\downclo [S]_U$ as an alternative form for
$[\downclo S]_U$.

\begin{example}
  \label{ex:7}
  Consider the substitution $[\theta]_U = [\{x/s(y,u,y), z/s(u,u), v/u
    \}]_{w,x,y,z}$ in the Example~\ref{ex:2}. Its abstraction in $\ShLinp$ is
  given by
  \[
    \alpha_\an(\alpha_\lp([\theta]_U)) = [\{xy, x^\infty y, xz, xz^\infty,
    w\}]_U = [ \downclo \{x^\infty y, xz^\infty, w \} ]_U \enspace .
  \]
  Analogously, substitutions $[\theta_1]_{U_1} = [\{ x/r(w_1,w_2,w_2,w_3,w_3),
        y/a, z/r(w_1) \}]_{x,y,z}$ and $[\theta_2]_{U_2} = [\{ x/r(w_4, w_5, w_6, w_8,
        w_8), u/r(w_4,w_7), v/r(w_7,w_8)\}]_{u,v,x}$ from Example~\ref{ex:3} are
  abstracted into $[T_1]_{U_1}=\downclo[x^\infty, xz]_{U_1}$ and $[T_2]_{U_2} =
    \downclo[uv, ux, vx^\infty, x]_{U_2}$  respectively.
\end{example}

\subsection{Matching operator}

By composition, $\alpha_\an \circ \alpha_\lp$ is an abstraction from
$\Isubst_\sim$ to $\ShLinp$. Properties of the Galois connection $\langle
  \alpha_\an, \gamma_\an \rangle$ may be lifted to properties of  $\alpha_\an
  \circ \alpha_\lp$. In our case, we need to define an abstract matching
$\matchan$ over $\ShLinp$ which is optimal w.r.t.~$\match$. However,
optimality of $\match_\an$ w.r.t.~$\match$ is immediately derived by optimality
w.r.t.~$\matchlp$. Since the correspondence between $\Linp$ and $\ShLinp$ is
straightforward, the same happens for $\matchlp$ and $\matchan$.

\begin{definition}[Matching over $\ShLinp$]
  \label{def:matchan}
  Given $[T_1]_{U_1}, [T_2]_{U_2} \in \ShLinp$, we define
  \[
    \matchan([T_1]_{U_1},[T_2]_{U_2})=\bigl[T'_2 \cup \downclo \bigl\{ o \in
    \Andysh(\Var) \mid
    o_{|U_1} \in T_1 \wedge o_{|U_2} \in (T''_2)^* \bigr\}
    \bigr]_{U_1 \cup U_2}
  \]
  where $T'_2=\{ B \in T_2 \mid B_{|U_1} = \emptyset \}$, $T''_2 = T_2
    \setminus T'_2$ and $T^* = \{ \biguplus X \mid X \subseteq T \cup T^2 \}$.
\end{definition}

\begin{example}
  \label{ex:anopt}
  Under the hypothesis of Examples~\ref{ex:4} and~\ref{ex:7}, and according to the definition of $\matchan$, we have that $T'_2 =  \{ uv \}$, $T''_2 = \{ ux, vx, vx^\infty, x \}$ and
  \begin{multline*}
    \matchan([T_1]_{U_1},[T_2]_{U_2})= \downclo [ uv, u^\infty v^\infty
      x^\infty, uxz, u^\infty x^\infty,
      v^\infty x^\infty, vxz, x^\infty, xz
    ]_{u,v,x,y,z}
    \enspace .
  \end{multline*}
  Note that
  \begin{multline*}
    \alpha_\an(\matchlp([S_1]_{U_1},[S_2]_{U_2})) = \downclo [ uv,
      u^\infty x^\infty, uxz, vx^\infty, x^\infty, xz]_{u,v,x,y,z} \\
    \leq \matchan([T_1]_{U_1},[T_2]_{U_2}) \enspace .
  \end{multline*}
  This is consistent with the fact that $\matchan$ is correct w.r.t.~$\matchlp$. The 2-sharing groups $u^\infty v^\infty x^\infty$, $v^\infty x^\infty$ and $vxz$ do not appear in $\alpha_\an(\matchlp([S_1]_{U_1},[S_2]_{U_2}))$ since  $\matchan$ is not complete w.r.t.~$\matchlp$
\end{example}

As anticipated before, we prove optimality of $\matchan$ w.r.t.~$\matchlp$,
which automatically entails optimality w.r.t.~$\match$.

\begin{theorem}[Correctness and optimality of $\matchan$]
  The operator $\matchan$ is correct and optimal w.r.t.~$\matchlp$.
\end{theorem}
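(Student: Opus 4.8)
The plan is to
exploit the Galois insertion $\langle \alpha_\an, \gamma_\an\rangle$ between $\ShLinp$ and $\Linp$, together with the optimality of $\matchlp$ (Theorem~\ref{th:matchlpopt}) and the weak completeness result (Lemma~\ref{lem:matchcompl}), rather than redoing the concrete analysis from scratch. The key is to relate $\matchan$ to $\matchlp$ through $\alpha_\an$ and $\gamma_\an$. First I would establish correctness: given $[T_1]_{U_1}, [T_2]_{U_2} \in \ShLinp$, I want to show $\matchan([T_1]_{U_1},[T_2]_{U_2}) \geq_\an \alpha_\an(\matchlp(\gamma_\an([T_1]_{U_1}), \gamma_\an([T_2]_{U_2})))$, since composing a correct operator with the adjoint maps yields a correct operator. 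The natural route is to verify the two operators agree structurally: under $\gamma_\an$ the set $T'_2$ pulls back to (a subset generating) $S'_2$ and $T''_2$ to $S''_2$, because the condition $B_{|U_1}=\emptyset$ is preserved by $\alpha_\an$ and its inverse (Proposition~\ref{prop:abstraction2}(1), which gives $\supp{\alpha_\an(B)}=\supp{B}$).

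\textbf{Correctness step.} The heart of the correctness argument is matching the two closure operators. The operator $S^*$ takes arbitrary multiset sums $\multisum \mathcal S$ over $\mwp(S)$, while $T^* = \{\biguplus X \mid X \subseteq T \cup T^2\}$ takes $\andybin$-sums over \emph{subsets} of $T \cup T^2$. I would prove that $\alpha_\an$ commutes appropriately: by Proposition~\ref{prop:abstraction2}(3) we have $\alpha_\an(\multisum \mathcal S) = \Andybin \alpha_\an(\mathcal S)$, and a multiset sum in which some group appears twice or more abstracts, via Proposition~\ref{prop:abstraction2}(4), into a $\andybin$-sum involving the delinearization $o^2$. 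This is exactly why $T^*$ is defined over $T \cup T^2$ rather than just $T$: a group used with multiplicity $\geq 2$ at the concrete ($\Linp$) level produces the delinearized group $o^2$ at the $\ShLinp$ level, and since $\andybin$ is idempotent on the $\infty$ values, one copy of $o$ and one of $o^2$ suffice to capture every multiplicity. Establishing $\alpha_\an((S''_2)^*) = (\alpha_\an(S''_2))^*$ (up to downward closure) is the key lemma; combined with the commutation of $\alpha_\an$ with restriction (Proposition~\ref{prop:abstraction2}(2)) it forces the condition $o_{|U_1}\in T_1 \wedge o_{|U_2}\in(T''_2)^*$ to be the exact $\alpha_\an$-image of the $\Linp$ membership condition, yielding correctness.

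\textbf{Optimality step.} For optimality I would argue that $\matchan = \alpha_\an \circ \matchlp \circ \gamma_\an$, since for a Galois insertion this composition is automatically the \emph{optimal} abstraction of $\matchlp$ (as recalled in the abstract-interpretation subsection, $\tilde f = \alpha\circ f\circ\gamma$ is optimal when $f$ is monotone). Having shown in the correctness step that $\matchan$ \emph{equals} this composition (not merely bounds it), optimality w.r.t.~$\matchlp$ is immediate. Optimality w.r.t.~$\match$ then follows by transitivity: $\matchlp$ is optimal w.r.t.~$\match$ by Theorem~\ref{th:matchlpopt}, and the composition of optimal abstractions along a Galois insertion is optimal, because $\alpha_\an \circ \alpha_\lp$ is the abstraction function from $\Isubst_\sim$ to $\ShLinp$.

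\textbf{Main obstacle.} The hard part will be the closure-operator equality $\alpha_\an((S''_2)^*) = (\alpha_\an(S''_2))^*$, and in particular justifying that restricting $T^*$ to \emph{subsets} of $T \cup T^2$ loses nothing compared to arbitrary $\Linp$-multisets. I expect the delicate direction to be showing that every group produced by $\matchlp$ at the $\Linp$ level, once abstracted, is already captured by the finite-subset construction in $\matchan$ — this requires carefully checking that multiplicities $\geq 3$ at the $\Linp$ level collapse (via $1\oplus 1=\infty$ and $\infty\oplus\infty=\infty$) to values already reachable by combining $o$ and $o^2$, so that no information beyond $T \cup T^2$ is needed. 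The downward-closure requirement of $\ShLinp$ must also be threaded through, but that follows routinely once the generating sets are matched.
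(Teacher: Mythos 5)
Your proposal is correct and takes essentially the same route as the paper: the paper's proof likewise establishes the exact equality $\matchan([T_1]_{U_1},[T_2]_{U_2})=\alpha_\an(\matchlp(\gamma_\an([T_1]_{U_1}),\gamma_\an([T_2]_{U_2})))$, matching $T'_2$ with $S'_2$ via $\supp{B}=\supp{\alpha_\an(B)}$ and the two closure operators via Proposition~\ref{prop:abstraction2}, with concrete multiplicities $\geq 2$ collapsing to the delinearization $o^2$ exactly as you anticipate, so that optimality (w.r.t.\ $\matchlp$, hence $\match$) falls out of the Galois-insertion characterization $\tilde f=\alpha\circ f\circ\gamma$. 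The only cosmetic difference is that the paper does not isolate your key lemma on $\alpha_\an((S''_2)^*)$ versus $(T''_2)^*$ but folds it into the two inclusions, exhibiting explicit witnesses (one or two copies of $B_{o''}\in\alpha_\an^{-1}(o'')$, and an $\omega$-sharing group $C$ using exponent $2$ where $o'$ has $\infty$) and threading the downward closure through a maximal element $o'\geq o$, just as your plan sketches.
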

\begin{proof}
  We need to prove that, for each $[T_1]_{U_1}, [T_2]_{U_2} \in \ShLinp$,
  \begin{equation}
    \label{eq:optan1}
    \matchan([T_1]_{U_1},[T_2]_{U_2})=\alpha_\an(\matchlp(\gamma_\an([T_1]_{U_1}),
      \gamma_\an([T_2]_{U_2}))) \enspace .
  \end{equation}
  To ease notation, we denote $\gamma_\an([T_1]_{U_1})$ and
  $\gamma_\an([T_2]_{U_2})$ by $[S_1]_{U_1}$ and $[S_2]_{U_2}$ respectively.
  Moreover, we denote with $S'_2$, $S''_2$, $T'_2$ and $T''_2$ the subsets of
  $S_2$ and $T_2$ given accordingly to Definitions~\ref{def:matchlp} and
  \ref{def:matchan}. Since $\supp{B}=\supp{\alpha_\an(B)}$, given $B \in S_2$
  we have that $B \in S_2'$ iff $\alpha_\an(B) \in T_2'$.

  Let $o \in \matchan([T_1]_{U_1},[T_2]_{U_2})$. If $o \in T'_2$, consider
  any $B \in \alpha^{-1}_\an(o) \subseteq S_2$. Then, $B \in S'_2 \subseteq
    \matchlp([S_1]_{U_1},[S_2]_{U_2})$. Therefore, $o = \alpha_\an(B)$ is in the
  right-hand side of \eqref{eq:optan1}. If $o \notin T'_2$ then $o \leq o'$
  where $o'_{|U_1} \in T_1$ and $o'_{|U_2} \in (T''_2)^*$, \ie there is $X
    \subseteq T''_2 \cup (T''_2)^2$ such that $o'_{|U_2}=\biguplus X$. For each $o'' \in
    X \cap T_2$,  let $B_{o''} \in  \alpha^{-1}_\an(o'')$. For each $o'' \in X
    \setminus T_2$, we have $o'' = (o''')^2$ for some $o''' \in T_2$, and let
  $B_{o''} \in \alpha^{-1}_\an(o''')$.  Let $\mathcal X$
  be a multiset containing a single copy of each $B''_o$ for $o \in X \cap T_2$
  and two copies of $B''_o$ for each $o \in X \setminus T_2$, e.g.,
  $\mathcal X = \multil B_{o''} \mid
    o'' \in X \cap T_2 \multir \uplus \biguplus \multil \multil B_{o''}, B_{o''}
    \multir  \mid o'' \in X \setminus T_2 \multir)$. Note that
  $\alpha_\an(\biguplus \mathcal X)=\biguplus \alpha_2(X)=o'_{|U_2}$ by
  \eqref{eq:abstraction1} of Proposition~\ref{prop:abstraction2}. Then, consider the  $\omega$-sharing group $C$ such
  that
  \[
    C(v)=\begin{cases}
      (\biguplus \mathcal X)(v) & \text{for $v \in U_2$,}                               \\
      o'(v)                     & \text{if $v \in U_1 \setminus U_2$ and $o(v)\leq 1$,} \\
      2                         & \text{otherwise}.
    \end{cases}
  \]
  It is clear that $\alpha_\an(C_{|U_1})= o'_{|U_1}$, hence $C_{|U_1} \in S_1$.
  Moreover, $C_{|U_2} = \Multisum X$ with $\mathcal X \in \mwp(S_2'')$.
  Therefore, we have $C  \in \matchlp([T_1]_{U_1}, [T_2]_{U_2})$ and
  $\alpha_\an(C)=o'$ is in the
  right-hand side of \eqref{eq:optan1}. The same holds for $o$ by downward
  closure of $\alpha_\an$.

  Conversely, let $o \in \alpha_\an(\matchlp([S_1]_{U_1}, [S_2]_{U_2}))$.
  As a consequence, there exists $B \in \matchlp([S_1]_{U_1}, [S_2]_{U_2})$
  such that $o \leq o'$ and $o' = \alpha_\an(B)$. It is enough to prove that
  $o' \in \matchan([T_1]_{U_1},[T_2]_{U_2})$. If $B \in S'_2$ then $o' \in
    T'_2$, hence  $o' \in \matchan([T_1]_{U_1},[T_2]_{U_2})$. If $B \notin
    S'_2$, then $B_{|U_1} \in S_1$ and $B_{|U_2} = \biguplus \mathcal X$ where
  $\mathcal X  \in \mwp(S''_2)$. By $B_{|U_1} \in S_1$ we have $o'_{|U_1} =
    \alpha_\an(B_{|U_1}) \in T_1$. For each $C \in \mathcal X$  with $\mathcal
    X(C)=1$, we define $o_{C}=\alpha_\an(C) \in T''_2$, while for each $C$ with
  $\mathcal X(C)>1$, we define $o_{C}=\alpha_\an(C)^2 \in (T''_2)^2$. We have
  that $o''_{|U_2} = \alpha_\an(B)_{|U_2} = \alpha_\an(B_{|U_2})=\alpha_\an(\Multisum \mathcal
    X)=\Multisum X$ where $X = \{ o_C \mid C \in \mathcal X\}$ is an element of
  $\partof(T''_2 \cup (T''_2)^2)$. This means that  $o' \in
    \matchan([T_1]_{U_1},[T_2]_{U_2})$.
\end{proof}

Although $\matchan$ is not complete w.r.t.~either $\matchlp$ or $\match$, we claim it
enjoys a property analogous to the one in Lemma~\ref{lem:matchcompl}.

\subsection{Optimization}
\label{sec:optimization}

In a real implementation, we would like to encode an element of $\ShLinp$ with
the set of its maximal elements. This works well only if we may compute
$\matchan$ starting from its maximal elements, without
implicitly computing the downward closure. We would also like $\matchan$ to
compute as few non-maximal elements as possible. We provide a new algorithm
for $\matchan$ following this approach.

%

\begin{definition}
  \label{def:matchopt}
  Given $T_1, T_2$ sets of $2$-sharing groups and $U_1, U_2 \subseteq \Var$, we
  define
  \[
    \matchaprime(T_1,U_1,T_2,U_2) = T'_2 \cup \bigcup_{o \in T_1}
    \matchaprime(o) \enspace ,
  \]
  where
  \[
    \matchaprime(o) = \left\{ \left(o \wedge \Multisum X\right) \uplus \Multisum (X \cap \overbar
    T)
    \Bigm| X \subseteq T''_2, \left(\Multisum \supp{X}\right)_{|U_1} \leq o_{|U_2} \right\} \enspace,
  \]
  with $T'_2$ and $T''_2$ as in Definition~\ref{def:matchan}, $\overbar T = \{ o' \in T''_2 \mid \forall v \in o' \cap U_1, o(v)=\infty \}$  and
  \[
    o \wedge o'=  \lambda v.\begin{cases}
      o(v)             & \text{if $v \in U_1 \setminus U_2$}, \\
      min(o(v), o'(v)) & \text{if $v \in U_1 \cap U_2$,}      \\
      o'(v)            & \text{otherwise.}
    \end{cases}
  \]
\end{definition}

The operator $\matchaprime$ aims at computing the set of maximal $2$-sharing
groups in $\matchan([\downclo T_1]_{U_1},[\downclo T_2]_{U_2})$. It works by
considering one sharing group $o \in T_1$ at a time, and calling an auxiliary
operator which computes the maximal $2$-sharing groups compatible with $o$. A
$2$-sharing group $o'$ is compatible with $o$ if $\supp{o'}\cap U_1 = \supp{o}
  \cap U_1$.

When computing the auxiliary operator $\matchaprime(o)$ we choose a
subset  $X$ of $T''_2$. Given $o' \in X$, note that $\supp{o'}$ may be viewed
as a $2$-sharing group which is the linearized version of $o'$: it has the
same support as $o'$, but all variables are linear. If $o' \in T''_2$, its
linearization is in $\downclo T_2$. The choice of $X$ is valid if the multiset
sum of all its linearizations is smaller than $o$ for all the variables in
$U_1 \cap U_2$. Once established that $X$ is a valid
choice, we do not take directly  $\Multisum \supp{X}$ as the resulting
$2$-sharing group, but we try to find an $o'' \geq \Multisum X$.

To this purpose, we observe that, given $o' \in X$, if $o(v)=\infty$
for each $v \in \supp{o'} \cap U_1$, then we may take $o'$ twice. We denote with
$\overbar T$  the set of all the sharing groups in  $T''_2$ which such a
property, and we unconditionally add to the result the sharing groups
in $X \cap \overbar{T}$, so that all these sharing groups are taken twice.
Therefore, the biggest element
compatible with $o$ is $(o \wedge \Multisum X) \uplus \Multisum (X \cap
  \overbar T)$.

\begin{example}
  Under the hypothesis of Examples~\ref{ex:4} and~\ref{ex:7}, let us define
  $T_1=\{x^\infty, xz\}$ and $T_2 = \{ uv, ux, vx^\infty, x \}$. We have
  $T'_2  = \{uv\}$ and $T''_2 = \{ux, vx^\infty, x \}$.

  Let us compute $\matchaprime(x^\infty)$. We have $\overbar{T}
    = T''_2$. If we take $X=\{ux, xv^\infty\}$, we have $\left(\Multisum \supp{X}\right)_{|U_1} =
    x^\infty$, hence the choice is valid. The corresponding result is $(x^\infty
    \wedge \Multisum X) \uplus \Multisum X = uv^\infty x^\infty \uplus
    uv^\infty x^\infty = u^\infty v^\infty x^\infty$. Overall, we have
  $\matchaprime(x^\infty)= \{ u^\infty x^\infty, v^\infty x^\infty,
    x^\infty, u^\infty x^\infty v^\infty \}$. Note that some results are
  computed by different choices of $X$. For example, both $\{ux, x\}$ and
  $\{ux\}$ generates $u^\infty x^\infty$.

  Let us compute $\matchaprime(xz)$ and take $X=\{ v x^\infty
    \}$. We have that $\left(\Multisum \supp{X}\right)_{|U_1}  = x = xz_{|U_2}$. In this case,
  $\overbar{T''_2} = \emptyset$. Therefore, the result is $x \wedge
    vx^\infty = vx$. Note that if we take $X= \{xz, x\}$ then $\left(\Multisum
    \supp{X}\right)_{|U_1} = x \uplus x= x^\infty$, and the choice is not valid. This shows
  that, due to the downward closure, choosing in $X$ a sharing group with a non-linear variable is very different from choosing the same variable twice. At the end,
  we have $\matchaprime(xz) = \{ uxz, vxz, xz \}$. Finally,
  \[
    \matchan(T_1,U_1,T_2,U_2) = \{ uv,  u^\infty x^\infty, v^\infty x^\infty,
    x^\infty, u^\infty x^\infty v^\infty , uxz, vxz, xz \} \enspace .
  \]
  This is exactly the set of maximal elements in $\matchan([\downclo
      T_1]_{U_1}, [\downclo T_2]_{U_2})$.
\end{example}

We can show that the correspondence between $\matchan$ and $\matchaprime$ in
the previous example was not by chance, proving the following:
%
%
\begin{theorem}
  Given $T_1, T_2$ sets of $2$-sharing groups such that $\downclo [T_1]_{U_1},
    \downclo [T_2]_{U_2} \in \ShLinp$, we have
  \[
    \matchan(\downclo[T_1]_{U_1},\downclo [T_2]_{U_2}) = \downclo
    [\matchaprime(T_1,U_1,T_2,U_2) ]_{U_1 \cup U_2} \enspace.
  \]
\end{theorem}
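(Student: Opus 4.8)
The plan is to prove the two set inclusions between the two downward-closed sets, after first recording the structural facts that make the two definitions comparable. The real work is in the optimality direction ($\matchan\subseteq\downclo\matchaprime$); soundness is a direct computation.

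First I would exploit that the order $\leq$ on $2$-sharing groups \emph{preserves supports} ($o\le o'$ forces $\supp{o}=\supp{o'}$). This lets me reconcile the fact that $\matchan(\downclo[T_1]_{U_1},\downclo[T_2]_{U_2})$ computes the split $T'_2,T''_2$ of Definition~\ref{def:matchan} relative to $\downclo T_2$, whereas $\matchaprime(T_1,U_1,T_2,U_2)$ computes it relative to the raw $T_2$. Writing $\widehat{T}'_2,\widehat{T}''_2$ for the split of $\downclo T_2$, membership in the primed part depends only on whether the support meets $U_1$, which is invariant along $\le$; hence $\widehat{T}'_2=\downclo T'_2$ and $\widehat{T}''_2=\downclo T''_2$. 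So the two occurrences of $T'_2$ coincide after closure, both sides of the claim are downward closed (so it suffices to compare them up to $\le$), and the problem reduces to matching $\downclo\{o\mid o_{|U_1}\in\downclo T_1,\ o_{|U_2}\in(\downclo T''_2)^*\}$ against $\downclo\bigcup_{o\in T_1}\matchaprime(o)$.

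For soundness I would fix $o\in T_1$ and $r=(o\wedge\Multisum X)\uplus\Multisum(X\cap\overbar{T})\in\matchaprime(o)$ with $X\subseteq T''_2$ valid, and compute $r_{|U_1}$ and $r_{|U_2}$ directly from the definition of $\wedge$. The validity condition $(\Multisum\supp{X})_{|U_1}\le o_{|U_2}$ (again using support-preservation) forces that at each $v\in U_1\cap U_2$ with $o(v)=1$ exactly one element of $X$ contains $v$; this yields $r_{|U_1}\le o$ with $\supp{r_{|U_1}}=\supp{o}$, hence $r_{|U_1}\in\downclo T_1$. For $r_{|U_2}$ I would, for each $o''\in X$, introduce the capped version $\widehat{o}''\le o''$ obtained by lowering $o''(v)$ to $1$ at those unique positions; then $r_{|U_2}=\biguplus Y$ with $Y=\{\widehat{o}''\mid o''\in X\setminus\overbar{T}\}\cup\{(\widehat{o}'')^2\mid o''\in X\cap\overbar{T}\}\subseteq\downclo T''_2\cup(\downclo T''_2)^2$. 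A convenient point is that distinct $o''$ never collapse to the same $\widehat{o}''$ (a capped position lies in a single element's support), so $Y$ really is a set and $r_{|U_2}\in(\downclo T''_2)^*$. With $T'_2\subseteq\widehat{T}'_2$ this gives $\downclo\matchaprime\subseteq\matchan$.

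For optimality, by downward closure it suffices to take $q$ in the core with $q_{|U_1}\le o\in T_1$ and $q_{|U_2}=\biguplus Y$, $Y\subseteq\downclo T''_2\cup(\downclo T''_2)^2$, and to produce one $r\in\matchaprime(o)$ with $r\ge q$. The hard part is that the given $Y$ may use several pieces sharing one generator in $T''_2$, whereas $\matchaprime$ picks a subset $X\subseteq T''_2$ and controls multiplicities only through $\overbar{T}$-doubling. I would therefore normalize $Y$: since $\le$ preserves supports and $1\oplus1=\infty$, two linear pieces over a common generator can be merged into a single square, and a linear piece is absorbed by a square over the same generator, so $Y$ rewrites to one contribution per generator, each either linear or squared; let $X\subseteq T''_2$ be that set of generators. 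Two facts then need checking: (i) every generator carrying a square lies in $\overbar{T}$ — a square forces $\infty$ on its $U_1$-support and $q_{|U_1}\le o$ forces $o=\infty$ there, which is exactly the defining condition of $\overbar{T}$; and (ii) $X$ is valid, with support equality from $\supp{q}\cap U_1=\supp{o}$ and $\supp{q_{|U_2}}=\bigcup_{p\in X}\supp{p}$, and the pointwise bound at $o(v)=1$ from $q(v)=1$ forcing a unique generator through $v$. A generator-by-generator comparison (linear generators contribute at least their piece; $\overbar{T}$-generators are doubled and dominate the square) gives $r\ge q$, so $q\in\downclo\matchaprime$.

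I expect the sole genuine obstacle to be the normalization of $Y$ and its synchronization with $\overbar{T}$ in the last paragraph: one must argue that the set-based $(\cdot)^*$ together with delinearization $(\cdot)^2$ realizes exactly the multiplicities reachable by $\matchaprime$ through doubling on $\overbar{T}$, and that the validity constraint on linearized supports is neither too weak (for soundness) nor too strong (for optimality). Everything else is a routine case split on whether a variable lies in $U_1\setminus U_2$, $U_1\cap U_2$, or $U_2\setminus U_1$, repeatedly using that $\le$ preserves supports.
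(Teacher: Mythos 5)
Your proof is correct, and its skeleton is the same as the paper's: both directions exchange pieces of $\downclo T''_2\cup(\downclo T''_2)^2$ for generators in $T''_2$, and both hinge on the facts that $\le$ preserves supports (so squaring depends only on the support, and the raw/closed splits satisfy $\downclo T'_2$, $\downclo T''_2$ — a bookkeeping point the paper leaves implicit and you rightly make explicit) and that a squared piece forces $o(v)=\infty$ on its $U_1$-support, which is exactly membership in $\overbar T$. Where you genuinely depart is at the two delicate steps, and in both cases your version is tighter than the paper's own argument. In the $\matchaprime\subseteq\matchan$ direction, the paper asserts the exact identities $o_{|U_1}=o_1$ and $o_{|U_2}=\Multisum Y$ with $Y=(X\setminus\overbar T)\cup(X\cap\overbar T)^2$; these fail when $o_1(v)=1$ but the unique generator through $v$ carries $\infty$ at $v$ (the $\wedge$ caps the value to $1$, so only $\le$ holds, and $\le$ alone does not put $o_{|U_2}$ in $(\downclo T''_2)^*$). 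Your capped pieces $\widehat{o}''$ repair precisely this, yielding an exact sum representation and hence core membership, and your uniqueness argument for why distinct pieces have distinct caps is sound, since by validity a capped position lies in exactly one support. In the $\matchan\subseteq\downclo\matchaprime$ direction, the paper picks a majorant in $T''_2$ for each linear piece and claims $\Multisum Y_a\ge\Multisum X_a$; this fails when two distinct elements of $X_a$ collapse onto the same majorant (their multiset sum saturates to $\infty$ while the majorant is counted once), and your normalization — merging linear pieces over a common generator into a square and then showing the square pushes that generator into $\overbar T$ — is exactly the missing argument. One point you wave at but should state: collisions in your set $Y$ between a capped piece and a square, or between two squares, can occur; they are harmless because any colliding elements are all-$\infty$ on a common support, so merging them does not change the saturated $\oplus$-sum — only collisions among capped linear pieces would lose multiplicity, and those you correctly rule out.
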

\begin{proof}
  We start by proving that if $o$ is an element in $\matchan(\downclo[T_1]_{U_1},\downclo [T_2]_{U_2})$, then $o \in \downclo \matchaprime(T_1,U_1,T_2,U_2)$.  If $o \in \downclo T'_2$ this is trivial since $T'_2 \subseteq \matchaprime(T_1,U_1,T_2,U_2)$. Otherwise, $o_{|U_1} \in \downclo T_1$ and $o_{|U_2} \in (\downclo T''_2)^*$, according to Definition~\ref{def:matchan}. Let $o_1  \in T_1$ such that $o_1 \geq o_{|U_1}$, we want to prove that there exists $\bar o \in \match'_\an(o_1)$ such that $o \leq \bar o$.

  Since $o_{|U_2} \in  (\downclo T''_2)^*$, there are $X_a \subseteq \downclo  T''_2$ and $X_b \subseteq (\downclo T''_2)^2$ such that $o_{|U_2} = \Multisum X_a \multisum \Multisum X_b$. For each $o_a \in X_a$, consider $o'_a \in T''_2$ such that $o'_a \geq o_a \in T''_2$. Let $Y_a$ be the set of all those $o'_a$. For each $o_b \in X_b$, consider an $o'_b \in T''_2$ such that $(o'_b)^2 = o_b$. Let $Y_b$ be the set of all those $o'_b$. By construction $\Multisum Y_a \geq \Multisum X_a \geq \Multisum \supp{Y_a}$ and $\Multisum Y_b \multisum \Multisum Y_b = \Multisum \supp{Y_b} \multisum \Multisum \supp{Y_b} = \Multisum X_b$.

  Let $Y= Y_a \cup Y_b$. Obviously $Y \subseteq T''_2$ and $(\Multisum \supp{Y})_{|U_1} = \Multisum \supp{Y_a} \multisum \Multisum \supp{Y_b} \leq (\Multisum X_a \multisum \Multisum X_b)_{|U_1} = (o_{|U_2})_{|U_1} = (o_{|U_1})_{|U_2} \leq (o_1)_{|U_2}$. Therefore, $\bar o = (o_1 \wedge \Multisum Y) \multisum \Multisum (Y \cap \overbar{T}) \in \matchan'(o_1)$. We now prove that $o \leq \bar o$.

  Given any $o_b \in X_b$, we have that $o_1(v) = o_b(v)=\infty$ for each  $v \in o_b \cap U_1$. This implies that $Y_b \subseteq \overbar{T}$.  Therefore, $\bar o \geq (o_1 \wedge \Multisum Y_a \multisum \Multisum Y_b) \multisum \Multisum Y_b = (o_1 \wedge \Multisum Y_a) \multisum \Multisum Y_b \multisum \Multisum Y_b  \geq (o_1 \wedge \Multisum X_a) \multisum \Multisum X_b = o_1 \wedge (\Multisum X_a \multisum \Multisum X_b) = o_1 \wedge o_{|U_2}$. Now, if $v  \in U_1 \setminus U_2$, we have $\bar o(v)= o_1(v) \geq o(v)$. If $v \in U_2 \setminus U_1$ we have $\bar o(v) \geq o_{|U_2}(v) = o(v)$. Finally, if $v \in U_1 \cap U_2$, then $o_1(v) \geq o_{|U_1}(v) = o_{|U_2}(v)=o(v)$, hence $\bar o(v) \geq o(v)$. This concludes one side of the proof.

  For the other side of the equality, assume $o \in \match'_2(T_1, U_1, T_2, U_2)$ and prove $o \in \matchan(\downclo[T_1]_{U_1},\downclo [T_2]_{U_2})$.  If $o \in T'_2$ this is trivial, since $\matchan(\downclo[T_1]_{U_1},\downclo [T_2]_{U_2}) \supseteq \downclo T'_2$. Otherwise, $o = (o_1 \wedge \Multisum X) \multisum \Multisum (X \cap \overbar{T})$ for some $o_1 \in T_1$ and $X \subseteq T''_2$ satisfying the properties of Definition~\ref{def:matchopt}. Consider $Y=(X \setminus \overbar T) \cup (X \cap \overbar T)^2$ which is an element of $T''_2 \cup (T''_2)^2$ such that $\Multisum Y = \Multisum X \multisum \Multisum (X \cap \overbar T) \in (T''_2)^*$. It is enough to prove that $o_{|U_1}=o_1$ and $o_{|U_2}= \Multisum Y$. If $v \in U_1 \setminus U_2$ we have $o(v)=o_1(v)$. Note that if $\bar o \in X \cap \overbar T$ and $v \in \bar o \cap U_1$, then $o_1(v)=\infty$. Therefore, for each $v \in U_1 \cap U_2$, we have $o(v)=o_1(v)$. Finally, if $v \in U_2$, we have $o(v)=\Multisum X \multisum \Multisum (X \cap \overbar T)= \Multisum Y$.
\end{proof}

\section[Abstract matching over Sharing X Lin]{Abstract matching over $\Sharing \times \Lin$}
\label{sec:shlin}

The reduced product $\ShLin=\Sharing \times \Lin$  has been used
for a long time in the analysis of aliasing properties, since it was
recognized quite early that the precision of these analyses could be greatly
improved by keeping track of the linear variables. In the following, we briefly recall the definition of the abstract domain following the presentation in \cite{AmatoS10-tplp}.

\subsection[The domain ShLin]{The domain $\ShLin$}

The domain $\ShLin$ couples an object of $\Sharing$ with the set of variables
known to be linear. Each element of $\ShLin$ is
therefore a triple: the  first component is an object of $\Sharing$, the
second component is an object  of $\Lin$, that is, the set of variables which
are linear in all the sharing
groups of the first component, and the third component is the set of variables
of interest. It is immediate that $\ShLin$ is an abstraction of $\ShLinp$ (and
thus of  $\Linp$).
\[
  \ShLin = \{ [S,L,U]\mid S \subseteq\partof(U), (S \neq\emptyset \Ra
  \emptyset \in S), L \supseteq U \setminus \vars(S), U \in \fwp(\Var)
  \} \enspace ,
\]
with the approximation relation $\leq_\shl$ defined as $[S,L,U]
  \leq_\shl [S',L',U']$ iff $U=U'$, $S \subseteq S'$, $L \supseteq L'$.
There is a Galois insertion of $\ShLin$ into $\ShLinp$ given by the
pair of maps:
\begin{align*}
  \alpha_\shl([T]_U)   & =[\{ \supp{o} \mid o \in T \},\{ x \in U
  \mid \forall o \in T.\ o(x) \leq 1\},U] \enspace ,                     \\
  \gamma_\shl([S,L,U]) & =[\downclo \{ B_L \mid B \in S \}]_U \enspace , \\
  \intertext{where $B_L$ is the 2-sharing group which has the same support of
    $B$, with linear variables dictated by the set $L$:}
  B_L                  & =\lambda v \in \Var. \begin{cases}
    \infty & \text{if $v \in B \setminus L$,} \\
    1      & \text{if $v \in B \cap L$,}      \\
    0      & \text{otherwise.}
  \end{cases}
\end{align*}

The functional composition of $\alpha_\lp$, $\alpha_\an$ and $\alpha_\shl$
gives
the standard abstraction map from substitutions to $\ShLin$.
We still use the
polynomial notation to represent sharing
groups, but now all the exponents are fixed to one.  Note that the
last component $U$ in $[S,L,U]$ is redundant since it can be retrieved
as $L \cup \vars(S)$.  This is because the set $L$ contains all the
ground variables.

\begin{example}
  \label{ex:9}
  Consider the substitution $[\theta]_U = [\{x/s(y,u,y), z/s(u,u), v/u
    \}]_{w,x,y,z}$ in the Example~\ref{ex:2}. Its abstraction in $\ShLin$ is
  given by
  \[
    \alpha_\shl(\alpha_\an(\alpha_\lp([\theta]_U))) = [ \{ xy, xz, w\}, \{ y, w \},
    U ] \enspace .
  \]
  Analogously, substitutions $[\theta_1]_{U_1} = [\{ x/r(w_1,w_2,w_2,w_3,w_3),
        y/a, z/r(w_1) \}]_{x,y,z}$ and $[\theta_2]_{U_2} = [\{ x/r(w_4, w_5, w_6, w_8,
        w_8), u/r(w_4,w_7), v/r(w_7,w_8)\}]_{u,v,x}$ from Example~\ref{ex:3} are
  abstracted into $[S_1,L_1,U_1]=[\{x, xz\},\{y, z\}, U_1]$ and $[S_2, L_2, U_2] =
    [\{ uv, ux, vx, x\}, \{u,v\}, U_2]$ respectively.
\end{example}

\subsection{Matching operator}

We want to provide an optimal abstract matching operator for $\ShLin$. We may
effectively compute $\matchshl$ by composing $\gamma_\shl$, $\matchan$ and
$\alpha_\shl$. However, we provide a more direct characterization of
$\matchshl$, which may potentially improve performance.

First, we define the auxiliary function $nl: \partof(\partof(\Var)) \ra
  \partof(\Var)$
which takes a set $X$ of sharing groups and returns the set of variables which
appear in $X$ more than once. In formulas:
\begin{equation}
  nl(X)= \{ v \in \Var \mid \exists B_1, B_2 \in X, B_1 \neq B_2, v \in B_1 \cap
  B_2 \}
\end{equation}
The name $nl$ stands for non-linear, since it is
used to recover those variables which, after joining sharing groups in $X$,
are definitively not linear.

\begin{definition}[Matching over $\ShLin$ ]
  \label{def:matchshlinopt}
  Given $[S_1,L_1,U_1]$ and $[S_2,L_2,U_2] \in \ShLin$, we define
  \[
    \matchshl([S_1,L_1, U_1],[S_2,L_2,U_2])= \left[s(S'_0 \cup S''_0), l(S'_0 \cup S''_0), U \right],
  \]
  where $U=U_1 \cup U_2$, $S'_2=\{ B \in S_2 \mid B \cap U_1 = \emptyset \}$, $S''_2=S_2 \setminus S'_2$, $\overbar S = \{B \in S''_2 \mid B \cap L_1 = \emptyset \}$ and
  \begin{align*}
    S'_0  & = \bigl\{ \langle B, L_2 \rangle \mid B \in S'_2 \bigr\},                                                                       \\
    S''_0 & = \Big\{ \left\langle B \cup \bigcup X, L_2  \setminus nl(X) \setminus \bigcup (X \cap \overbar S) \right\rangle \Bigm| B \in S_1, \\
          & \qquad X \subseteq S''_2,  B \cap U_2 = \left(\bigcup X\right) \cap U_1, L_1 \cap nl(X)=\emptyset \Big\},                      \\
    s(H)  & = \left\{ B \mid \langle B, L \rangle \in H \right\},                                                                          \\
    l(H)  & = \bigcap \left\{ L_1 \cup L \cup (U \setminus B) \mid \langle
    B,L \rangle \in H \right\}.
  \end{align*}
\end{definition}


This operator is more complex than previous ones, because linearity
information is not connected to sharing groups and needs to be handled
separately. In view of this and a similar situation which happens for
unification \citep{AmatoS10-tplp}, it seems that the idea of embedding
linearity within sharing groups from \cite{King94-esop} was particularly insightful.

We give an intuitive explanation of $\matchshl$. It essentially works by
simulating the optimized version of $\matchan$ starting from $[T_1]_{U_1}=\gamma_\an([S_1,L_1,U_1])$
and  $[T_2]_{U_2}=\gamma_\an([S_2, L_2, U_2])$. The sets $S'_2$ and $S''_2$
are defined as for the other matching operators. Each element
of $H$, $S'_0$ or $S''_0$ is a pair $\langle B, L \rangle$ which corresponds to
the $2$-sharing group $B_{L \cup L_1}$ in
$\matchan([T_1]_{U_1}, [T_2]_{U_2})$. The component $B$ is the support of
the $2$-sharing group, while $L$ is the set of linear variables.  We only
record a subset of the linear variables, since those in $L_1$ are always linear.

The set $S'_0$ encodes all the maximal $2$-sharing groups derived from
$S'_2$. The  set $S''_0$ encodes $2$-sharing groups which may be generated by
gluing the sharing groups in $S''_2$ in a way which is compatible with $S_1$.
A given choice of $X \subseteq T''_2$ is compatible with a sharing group $B \in S_1$
only if $\bigcup X$ and $B$ have the same support on the common variables $U_1 \cap U_2$ and if $X$
does not conflict with the linearity of variables given by $L_1$. This means that we cannot use a
variable in $L_1$ more than once, therefore $L_1 \cap nl(X)=\emptyset$. Note that we may
use a variable $v \in L_1 \setminus L_2$, since $v \notin L_2$ means that $v$ is
possibly, but not definitively, non-linear. On the contrary, if $v \in nl(X)$,
then $v$ is definitively non-linear  in the resultant sharing group. Once established
that $X$ is compatible with $B$, the set $\overbar S$ plays the same role of $\overbar T$ in
Definition~\ref{def:matchopt}: we may join another copy of the sharing groups
in $X \cap \overbar S$, making all their variables non-linear.

Finally, the maps $s$ and $l$ extract from $S'_0$ and $S''_0$ the set of all the sharing
groups, and the set of variables which are linear in all the sharing groups.

\begin{example}
  Consider the substitutions in Example~\ref{ex:9}. According to the definition
  of $\matchshl$, we have: $S'_2 = \{uv\}$, $S''_2 = \overbar S = \{ux, vx, x \}$,
  $S'_0 = \{ \langle uv, \{ u,v \} \rangle \}$ and $S''_0 = \{ \langle ux,
    \emptyset \rangle, \langle vx, \emptyset \rangle, \langle x, \emptyset
    \rangle, \langle uvx, \emptyset \rangle, \langle uxz,
    \emptyset \rangle, \langle vxz, \emptyset \rangle, \langle xz, \emptyset
    \rangle, \langle uvxz, \emptyset \rangle \}$. The final result is
  \begin{multline}
    \matchshl([S_1,L_1,U_1],[S_2,L_2,U_2])=[\{uv, uvx, ux, vx, x, \\
    uvxz, uxz, xz,  vxz \}, \{ y,z\}, U_1  \cup U_2] \enspace .
  \end{multline}
  Considering the results for $\matchan$ in the Example~\ref{ex:anopt}, we have:
  \begin{multline*}
    \alpha_\shl(\matchan([T_1]_{U_1},[T_2]_{U_2}))\\
    = \alpha_\shl(
    \downclo [ uv, u^\infty v^\infty x^\infty, uxz, u^\infty x^\infty,
      v^\infty x^\infty, vxz, x^\infty, xz
    ]_{u,v,x,y,z} )\\
    = [\{ uv, uvx, uxz, ux, vx, vxz, x, xz\}, \{ y,z\},\{u,v,x,y,z\}]
  \end{multline*}
  and $\matchshl([S_1,L_1,U_1],[S_2,L_2,U_2]) >
    \alpha_\shl(\matchan([T_1]_{U_1},[T_2]_{U_2}))$. In particular, the sharing
  group $uvxz$ does not appear in
  $\alpha_\shl(\matchan([T_1]_{U_1},[T_2]_{U_2}))$ which proves that
  $\matchshl$ is not a complete abstraction of $\matchan$. However, the next theorem shows that $\matchshl$ is
  optimal w.r.t.~$\matchan$, and by composition also w.r.t.~$\matchlp$ and $\match$.
\end{example}

\begin{theorem}[Optimality of $\matchshl$]
  The operator $\matchshl$ is correct and optimal w.r.t.~$\matchan$.
\end{theorem}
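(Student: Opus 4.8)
The plan is to prove correctness and optimality at once, by exhibiting $\matchshl$ explicitly as the best correct approximation of $\matchan$. Since $\langle \alpha_\shl, \gamma_\shl\rangle$ is a Galois insertion of $\ShLin$ into $\ShLinp$ and $\matchan$ is monotone, the optimal abstraction of $\matchan$ is $\alpha_\shl \circ \matchan \circ \gamma_\shl$. Hence, mirroring the proof of the preceding theorem on $\matchan$, it suffices to establish the identity
\[
\matchshl([S_1,L_1,U_1],[S_2,L_2,U_2]) = \alpha_\shl\bigl(\matchan([T_1]_{U_1},[T_2]_{U_2})\bigr) \enspace,
\]
where $[T_i]_{U_i}=\gamma_\shl([S_i,L_i,U_i])=[\downclo\{B_{L_i}\mid B\in S_i\}]_{U_i}$. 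I would unfold both sides and compare the two components of the resulting element of $\ShLin$ separately: the \emph{sharing component} (the supports of the 2-sharing groups) and the \emph{linearity component} (the variables linear in every 2-sharing group). The whole argument rests on the correspondence, already announced informally, between a pair $\langle B,L\rangle \in S'_0 \cup S''_0$ and the 2-sharing group $B_{L\cup L_1}$ of $\matchan([T_1]_{U_1},[T_2]_{U_2})$.

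For the sharing component I would match supports on both sides. The contribution of $S'_2$/$T'_2$ is immediate, since $\supp{o}=\supp{\alpha_\an(o)}$ and $B \cap U_1 = \emptyset$ is equivalent to $o_{|U_1}=\emptyset$. The substantive part is the glued component: I would show that a choice $X \subseteq S''_2$ satisfying $B \cap U_2 = (\bigcup X)\cap U_1$ and $L_1 \cap nl(X)=\emptyset$ for some $B \in S_1$ yields, via $\gamma_\shl$, a family of 2-sharing groups in $T''_2$ whose multiset sum, with the groups of $X \cap \overbar S$ taken twice, realizes some $o$ with $o_{|U_1}\in T_1$ and $o_{|U_2}\in (T''_2)^*$; and conversely that every such $o$ arises this way. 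Here the doubling permitted for groups in $\overbar S = \{B \in S''_2 \mid B \cap L_1 = \emptyset\}$ is exactly the image under $\gamma_\shl$ of the $(T''_2)^2$ summands allowed inside $(T''_2)^*$, and the condition $B \cap L_1 = \emptyset$ translates the requirement $o(v)=\infty$ for $v \in o'\cap U_1$ that licenses doubling in the optimized $\matchan$. Either Definition~\ref{def:matchan} or, more directly, the already established $\matchaprime$ characterization can serve as the target of this correspondence, since $\matchshl$ is designed to simulate the optimized operator.

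The main obstacle, and where I would spend most of the care, is the linearity component, because $\matchan$ records multiplicity inside each 2-sharing group through the $\infty$ exponent (with $1\oplus 1 = \infty$), whereas $\matchshl$ keeps a single global set of linear variables. Concretely, I would verify that $l(S'_0\cup S''_0) = \{x \in U \mid \forall o \in \matchan([T_1]_{U_1},[T_2]_{U_2}).\ o(x)\leq 1\}$. For a pair $\langle B, L\rangle$ the non-linear variables of the associated $B_{L\cup L_1}$ are precisely $B \setminus (L\cup L_1)$, so $v$ is linear in that group iff $v \in L_1 \cup L \cup (U\setminus B)$; intersecting over all pairs then reproduces the $\alpha_\shl$ condition. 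The delicate points are (i) that a variable of $L_2$ is demoted to non-linear exactly when it occurs in two distinct groups of $X$ (hence the subtraction of $nl(X)$) or when it belongs to a doubled group (hence the subtraction of $\bigcup(X\cap\overbar S)$), matching $1\oplus 1=\infty$; and (ii) that optimality, not merely correctness, requires the recorded $L$ to be as large as possible for each attainable support $B$, so that the intersection defining $l$ is not spuriously shrunk. I expect the crux to be showing that the admissibility constraint $L_1 \cap nl(X)=\emptyset$ never discards a genuinely attainable group and that, conversely, each maximal 2-sharing group of $\matchan$ is captured by some admissible $X$; combined with Proposition~\ref{prop:abstraction2}(\ref{eq:abstraction1}) relating $\alpha_\an$ to multiset sum, this closes the gap and yields the two-sided containment for both components.
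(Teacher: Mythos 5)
Your proposal is correct and follows essentially the same route as the paper: the paper likewise proves the identity $\matchshl = \alpha_\shl \circ \matchan \circ \gamma_\shl$ (working through the $\matchaprime$ characterization on maximal elements), using exactly your correspondence $\langle B,L\rangle \mapsto B_{L\cup L_1}$ (its map $\gamma_0$), the equivalence $B'\in\overbar S \iff B'_{L_2}\in\overbar T$ for the doubling, and the pointwise linearity check showing $l(H)=\bigcap\{L_1\cup L\cup(U\setminus B)\mid \langle B,L\rangle\in H\}$ matches the $\alpha_\shl$ condition. The two-sided containment you outline, including the translation of $nl(X)$ and $\bigcup(X\cap\overbar S)$ into the $1\oplus 1=\infty$ behaviour, is precisely how the paper closes the argument, with optimality then following automatically from the Galois-insertion framework rather than from a separate maximality argument.
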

\begin{proof}
  We prove that, given $[S_1, L_1, U_1]$ and $[S_2, L_2, U_2] \in \ShLin$, we have
  \[
    \matchshl([S_1, L_1, U_1],[S_2, L_2, U_2])= \alpha_\shl(\downclo [\match'_\an(T_1, U_1, T_2, U_2)]_{U})
  \]
  where $T_i = \{B_{L_i} \mid B \in S_i\}$ is the set of maximal elements of $\gamma_\shl([S_i, L_i, U_i])$ and $U=U_1 \cup U_2$.

  Let us define the function $\gamma_0$ which maps a pair $\langle B, L \rangle$ with $B, L \in \partof(\Var)$ to the $2$-sharing group $B_{L \cup L_1}$. We show that, given $H \subseteq \partof(\Var) \times \partof(\Var)$, we have $\alpha_\shl([\downclo \gamma_0(H)]_U) =  [s(H), l(H), U]$. Assume $\alpha_\shl([\downclo \gamma_0(S)]_U])=[R,V,U]$. We have $R=\{\supp{o} \mid o \in \downclo \gamma_0(S)\}= \{\supp{o} \mid o \in \gamma_0(S)\}=\{ \supp{\gamma_o(\langle B,L \rangle)} \mid  \langle B,L \rangle \in S \}=\{ B \mid \langle B,L \rangle \in S \}= s(S)$ and $V=\{ x\in U \mid \forall o \in \downclo \gamma_0(S), o(x) \leq 1\} = \{ x \in U \mid  \forall o \in\gamma_0(S), o(x) \leq 1\} = \{ x \in U \mid  \forall \langle B, L \rangle \in S, B_{L \cup L_1}(x) \leq 1\} \} = \{ x \in U \mid  \forall \langle B, L \rangle \in S, x \in L_1 \cup L \cup (U \setminus B) \} = \bigcap \{ L_1 \cup L \cup (U \setminus B) \mid \langle B, L \rangle \in S \} = l(S)$.

  Therefore, if we prove that $\gamma_0(S'_0 \cup S''_0) = \match'_\an(T_1, U_1, T_2, U_2)$, then we have
  $\match_\shl([S_1,T_1,U_1],[S_2,T_2,U_2])= [s(S'_0 \cup S''_0), l(S'_0 \cup S''_0), U] = \alpha_\shl([\downclo \gamma_0(S'_0 \cup S''_0)]_U]) = \alpha_\shl(\downclo  [\match'_\an(T_1, U_1, T_2, U_2)]_{U}))$.

  Let us take $o \in \gamma_0(S'_0 \cup S''_0)$ and prove $o \in \match'_\an(T_1, U_1, T_2, U_2)$.
  If $o=B_{L \cup L_1}$ for some $\langle B,L \rangle \in S'_0$, then $B \in S'_2$ and $L=L_2$.
  Therefore, $B_{L \cup L_1} = B_{L_2 \cup L_1} = B_{L_2} \in T_2$ since $B \cap U_1 = \emptyset$.
  In particular, $B_{L_2} \in T'_2$, hence $B_{L_2} \in \match'_\an(T_1, U_1, T_2, U_2)$.
  Otherwise, $o= \gamma_0(\langle C, L \rangle)$ for $\langle C, L \rangle \in S''_0$ where $C = B \cup \bigcup X$ and
  $L=L_2 \setminus nl(X) \setminus \bigcup (X \cap \overbar S)$ according to Definition~\ref{def:matchshlinopt}.
  For each sharing group $B' \in X$, consider the $2$-sharing group $B'_{L_2} \in T''_2$ and let $Y$ be the
  set of all those $2$-sharing groups. We want to prove that $o$ is generated by $B_{L_1} \in T_1$ and
  $Y \subseteq T''_2$, according to Definition~\ref{def:matchopt}. First, note that $\supp{(\Multisum
    \supp{Y})_{|U_1}}=\supp{(\Multisum X)_{|U_1}} = (\bigcup X) \cap U_1 = B \cap U_2 = \supp{(B_{L_1})_{|U_2}}$.
  Now, assume $v \in \supp{(B_{L_1})_{|U_2}}$. If $v$ is linear in $(B_{L_1})_{|U_2}$, then $v \in L_1$, hence
  $v \notin nl(X)$ and $v$ is linear in $(\Multisum X)_{|U_1}=(\Multisum	\supp{Y})_{|U_1}$.
  Therefore, $Y$ is a valid choice for $\matchan'$ and $\match'_\an(B)$ generates $o'=(B_{L_1} \wedge \Multisum Y) \multisum \Multisum (Y \cap \overbar T)$. We prove $o'=o$.

  First, we prove that $B' \in \overbar S$ iff $B'_{L_2} \in \overbar{T}$. We have $B'_{L_2} \in \overbar T$ iff $B'_{L_2} \in T''_2$ and $\forall v \in \supp{B'_{L_2}} \cap U_1, B'_{L_1}(v)=\infty$, iff $B' \in S''_2$ and $\forall v \in B' \cap U_1$, $v \notin L_1$ iff $B' \cap L_1 = \emptyset$ iff $B' \in S''_2$ and $B' \cap L_1=\emptyset$ iff $B \in S$.

  Now, it is immediate to check that, $\supp{o}= B \cup \bigcup X =\supp{o'}$. Then, consider $v \in \supp{o'}$. We have that:
  \begin{align*}
     & o'(v)=1 \Leftrightarrow                                                                                           \\
     & (B_{L_1}(v)=1 \vee \left(\Multisum Y\right)(v)=1) \wedge \left(\Multisum (Y \cap \overbar T)\right)=0 \Leftrightarrow \\
     & (v \in L_1 \vee (v \in L_2 \setminus nl(X))) \wedge v \notin \bigcup (X \cap \overbar S) \Leftrightarrow              \\
     & v \in L_1 \vee (v \in L_2 \setminus nl(X) \setminus \bigcup (X \cap \overbar S) \Leftrightarrow                       \\
     & v \in L_1 \vee v \in L \Leftrightarrow                                                                            \\
     & B_{L \cup L_1}(v) = o(v)=1
  \end{align*}
  %
  %

  It remains to prove that given  $o \in \match'_\an(T_1, U_1, T_2, U_2)$, we have $o \in \gamma_0(S'_0 \cup S''_0)$. If $o \in T'_2$, then $o=B_{L_2}$ with $B \in S'_2$. Therefore, $\langle B, L_2 \rangle \in S'_0$ and $o= \gamma_0(\langle B,L_2 \rangle)$. Otherwise, $o=(o' \wedge \Multisum Y) \multisum \Multisum (Y \cap \overbar T)$.  We know $o'=B'_{L_1}$ with $B' \in S_1$ and let $X=\{ \supp{o''} \mid o'' \in Y \}$. By Definition \ref{def:matchopt}, $\bigcup X \cap U_1 = \supp{\multisum Y} \cap U_1 = \supp{o'} \cap U_2 = B' \cap U_2$. Moreover, if $v \in L_1$ then $o'(v) \leq 1$ and therefore $v$ cannot appear twice in $Y$, which means $v \notin nl(X)$, hence $L_1 \cap nl(X) = \emptyset$. Therefore, $B'$ and $X$  makes a valid choice for  $\match_\shl$ and generate the pair
  \[
    \langle B, L \rangle = \left\langle B' \cup \bigcup X, L_2 \setminus nl(X) \setminus \bigcup (X \cap \overbar S) \right\rangle .
  \]
  Using the first half of the proof, it is easy to check that $B_{L \cup L'} = o$, which terminates the proof.
  %
  %
  %
\end{proof}

\section{Evaluation of matching in goal-dependent analysis}
\label{sec:examples}

We now show two examples, in the context of goal-dependent analysis, where the newly introduced matching operators improve the precision \wrt what is attainable using only the mgu operators in \cite{AmatoS10-tplp}.

\subsection{Matching and backward unification}

First, we show with a simple example how the matching operator strictly improves the result of a standard goal-dependent analysis using forward and backward unification. Consider the goal $p(x,f(x,z),z)$ with
the (abstract) call substitution $[x,z]_{xz}$ and the trivial program with just one clause:

\medskip\noindent
\hspace{1cm} $p(u,v,w).$
\medskip

In order to analyze the goal, we first need to perform the forward unification between the call substitution  $[x,z]_{xz}$, the goal $p(x,f(x,z),z)$ and the head of the clause $p(u,v,w)$, and then project the result on the variables of the clause.
In order to keep the notation simple, we do not perform renaming, unless necessary. This amounts to computing:
\[
	\mgu_\lp([x,z]_{xz}, \{u/x, v/f(x,z), w/z\}) = [uvx, vwz]_{uvwxz} \enspace .
\]
By projecting the result on the variables of the clause we obtain the  entry substitution $[uv, vw]_{uvw}$. Since the clause has no body, it is immediate to see that the  exit substitution coincides with the entry substitution.

We now need to compute the backward unification of the exit substitution $[uv, vw]_{uvw}$, the call substitution $[x, z]_{xz}$ and $\theta=\{u/x, v/f(x,z), w/z\}$. If we implement this operator with the aid of matching, we may first unify  $[x, z]_{xz}$  with $\theta$, obtaining $[uvx, vwz]_{uvwxz}$ as above, and then apply the matching with the exit substitution $[uv, vw]_{uvw}$ obtaining:
\[
	\match_\lp([uv, vw]_{uvw}, [uvx, vwz]_{uvwxz} ) = [uvx, vwz]_{uvwxz} \enspace .
\]
By projecting the result on the variables of the goal we obtain the result $[x, z]_{xz}$ which proves that $x$ and $z$ do not share.

On the contrary, if we avoid matching, the backward unification operator must unify  $[x, z]_{xz}$ with $[uv, vw]_{uvw}$ and $\theta$ in any order it deems fit.
 However, this means that the operator must correctly approximate the mgu of any substitution $\theta_1$ in the concretization of the call substitution  $[x,z]_{xz}$, with any substitution $\theta_2$ in the concretization of the exit substitution $[uv, vw]_{uvw}$ and $\theta=\{u/x, v/f(x,z), w/z\}$.
%
%
If we choose $\theta_1 = \epsilon$ and $\theta_2 = \{v/f(w,u)\}$, we obtain that the unification of $\theta_1$, $\theta_2$ and $\theta$ is
\[
	\{u/x, v/f(x,x), w/x, z/x \}
\]
which is not in the concretization of $[uvx, vwz]_{uvwxz}$, since $u$, $v$ and $w$ share a common variable. Moreover, in this substitution $x$ and $z$ share. This means that, after the projection on the variables of the goals $x$ and $z$, the result will always include the $\omega$-sharing group $xz$.
Note that this consideration holds for any correct abstract unification operator which avoids matching.


It is worth noting that the above example, with minimal changes, also works for $\ShLinp$ and $\ShLin$: also in these cases, matching improves the precision of the analysis.

\subsection[Example on a non-trivial program]{Example on a non-trivial program}

Consider a program implementing the $\member$ predicate. Using the Prolog notation for lists, we have:

\medskip\noindent
\hspace*{1cm}$\member(u, [u|v]).$\\
\hspace*{1cm}$\member(u, [v|w])  \leftarrow \member(u, w).$
\medskip

We want to analyze the goal $\member(x,[y])$ in the domain $\ShLinp$ using the \emph{call substitution} $[xy, xz]_{xyz}$.

We start by considering the first clause of $\member$. The concrete unification of the goal $\member(x,[y])$ and the head of the clause $\member(u, [u|v])$ yields the most general unifier $\theta = \{ x/u, y/u, v/[] \}$. Forward unification computes the entry substitution as the abstract mgu between the call substitution $[xy, xz]_{xyz}$ and $\theta$. Proceeding one binding at a time, we have
\begin{itemize}
  \item $\mgu_\an([xy, xz]_{xyz}, \{x/u\}) = [uxy, uxz]_{uxyz}$;
  \item $\mgu_\an([uxy, uxz]_{uxyz}, \{y/u\}) = \downclo [u^\infty x^\infty y^\infty]_{uxyz}$;
  \item $\mgu_\an(\downclo [u^\infty x^\infty y^\infty]_{uxyz} , \{v/[]\} ) = \downclo [u^\infty x^\infty y^\infty]_{uvxyz}$.
\end{itemize}
Projecting over the variables of the clause, we get the entry substitution $\downclo [u^\infty]_{uv}$. Since this clause has no body, the entry substitution is equal to the exit substitution, and we may proceed to compute the answer substitution trough backward unification.

First, we consider the case when the backward unification is performed using the standard $\mgu_\an$ operator. We need to unify the call substitution $[xy, xz]_{xyz}$, the exit substitution $\downclo [u^\infty]_{uv}$ and the concrete substitution $\theta$ (the same as before). Unifying call and exit substitution is immediate since they are relative to disjoint variables of interest: the result is  $\downclo [u^\infty, xy, xz]_{uvxyz}$, obtained by collecting all sharing groups together. This should be unified with $\theta$. Proceeding one binding at a time, and omitting the set of variables of interest since it does not change, we have:
\begin{itemize}
  \item $\mgu_\an(\downclo [u^\infty, xy, xz], \{x/u\}) = \downclo [u^\infty x^\infty y^\infty, u^\infty x^\infty y^\infty z^\infty, u^\infty x^\infty z^\infty]$;
   \item $\mgu_\an(\downclo [u^\infty x^\infty y^\infty, u^\infty x^\infty y^\infty z^\infty, u^\infty x^\infty z^\infty], \{y/u\}) =  \downclo [u^\infty x^\infty y^\infty, u^\infty x^\infty y^\infty z^\infty]$;
  \item $\mgu_\an( \downclo [u^\infty x^\infty y^\infty, u^\infty x^\infty y^\infty z^\infty],\{v/[]\})=   \downclo [u^\infty x^\infty y^\infty, u^\infty x^\infty y^\infty z^\infty]$.
\end{itemize}
Projecting over the set of variables in the goal we get $\downclo [x^\infty y^\infty, x^\infty y^\infty z^\infty]_{xyz}$.

On the contrary, if we perform backward unification using the matching operation, we need to compute the matching of the exit substitution $[u^\infty]_{uv}$ with the entry substitution before variable projection $[u^\infty x^\infty y^\infty]_{uvxyz}$, namely:
\[
  \match_\an(\downclo [u^\infty]_{uv}, \downclo [u^\infty x^\infty y^\infty]_{uvxyz}) =   \downclo [u^\infty x^\infty y^\infty]_{uvxyz} \enspace .
\]
Projecting over the variables of the goal, we get $\downclo [x^\infty y^\infty]_{xyz}$: using matching we can prove that $z$ is ground in the answer substitution.

However, we still need to check what happens when we analyze the second clause of the $\member$ predicate. In this case, the concrete unification between $\member(x, [y])$ and $\member(u, [v|w])$ gives the substitution $\theta = \{x/u, y/v, w/[]\}$. Then, forward unification between the call substitution and $\theta$ gives
\begin{itemize}
  \item $\mgu_\an([xy, xz]_{xyz}, \{x/u\}) = [uxy, uxz]_{uxyz}$;
  \item $\mgu_\an([uxy, uxz]_{uxyz}, \{y/v\}) = [uvxy, uxz]_{uvxyz}$;
  \item $\mgu_\an([uvxy, uxz]_{uvxyz}, \{w/[]\}) = [uvxy, uxz]_{uvwxyz}$.
\end{itemize}
Projecting over the variables of the clause we get the entry substitution $[uv, v]_{uvw}$.

Now we should compute the answer substitution of the body $\member(u, w)$ under the call substitution $[uv, u]_{uvw}$. We could proceed showing all the details, but we try to be more concise. In the abstract substitution $[uv, w]_{uvw}$ the variable $w$ is known to be ground. When $\member$ is called with its second argument ground, the first argument becomes ground too. This property is easily captured by $\Sharing$ and more precise domains, independently of the fact that we use matching or not for the backward unification.
Therefore, we can conclude that the answer substitution for the goal $\member(u, w)$ under the entry substitution $[uv, u]_{uvw}$ is $[\emptyset]_{uvw}$. Although we do not generally write the empty $\an$-sharing group $\emptyset$ in an element of $\ShLinp$, in this case it is important to write it in order to distinguish $[\emptyset]_{uvw}$, denoting those substitutions in which $u, v, w$ are ground, from $[]_{uvw}$, denoting a non-succeeding derivation.

Performing the backward unification of $[\emptyset]_{uvw}$, we get the answer substitution $[\emptyset]_{xyz}$, independently of the use of matching.

Putting together the results we got for analyzing the goal $\member(x, [y])$ according to the two clauses of the program, we have shown that using matching we get $[x^\infty y^\infty]_{xyz}$ while using standard unification we get $[x^\infty y^\infty, x^\infty y^\infty z^\infty]_{xyz}$, and we are not able to prove that $z$ is ground after the goal returns.

\section{Conclusion}
\label{sec:conclusion}

In this paper we have extended the domain $\Linp$ \citep{AmatoS10-tplp} to goal-dependent analysis, by introducing a matching operator, and proved its optimality. From this operator we have derived the optimal matching operators for the well known $\ShLinp$ \citep{King94-esop} and $\Sharing \times \Lin$ \citep{MuthukumarH92-jlp} abstract domains.

As far as we know, this is the first paper which shows matching optimality results for domains combining sharing and linearity information. In particular, the matching operators presented in \citep{HansW92-tr, King00-jlp} for the domain \textit{SFL}, which combines set-sharing, linearity and freeness information, are not optimal, as shown by \cite{AmatoS09-tplp}.

Recently logic programming has been used as an intermediate representation for
the analysis of imperative or object-oriented programs and services (see, e.g., \cite{PeraltaGS98-sas,HenriksenG09-scam,BentonF07-ppdp,MendezLojoNH07-lopstr,SpotoMP10-toplas,AlbertAGPZ12-tcs,IvanovicCH13-computing,GangeNSSS15-tplp,DeAngelisFGH+21-tplp}).
Since many of these approaches
use existing logic program analysis on the transformed program, we believe that
they can benefit from more
precise logic program analysis.

%
%


\medskip

\noindent Competing interests: The authors declare none.

\bibliographystyle{tlplike}
\bibliography{matching}

\end{document}